\newtheorem{definition}{Definition}
\newtheorem{theorem}{Theorem}
\newtheorem{lemma}{Lemma}
\newtheorem{corollary}{Corollary}
\theoremstyle{nonumberplain}
\newtheorem{proof}{Proof}
\begin{document}

\title{Proof of Convergence and Performance Analysis for Sparse Recovery via Zero-point Attracting Projection}

\author{Xiaohan~Wang,~Yuantao~Gu\thanks{This work was partially supported by National Natural Science Foundation of China (NSFC 60872087 and NSFC U0835003). The authors are with the Department of Electronic Engineering, Tsinghua University, Beijing 100084, China. The corresponding author of this paper is Yuantao Gu (Email: gyt@tsinghua.edu.cn).},~and~Laming~Chen}

\date{Received July 15, 2011; accepted Mar. 31, 2012.\\\vspace{1em}
This article appears in \textsl{IEEE Transactions on Signal Processing}, 60(8): 4081-4093, 2012.}

\maketitle

\begin{abstract}
A recursive algorithm named Zero-point Attracting Projection (ZAP) is proposed recently for sparse
signal reconstruction. Compared with the reference algorithms, ZAP demonstrates rather good
performance in recovery precision and robustness. However, any theoretical analysis about the
mentioned algorithm, even a proof on its convergence, is not available. In this work, a strict
proof on the convergence of ZAP is provided and the condition of convergence is put forward. Based
on the theoretical analysis, it is further proved that ZAP is non-biased and can approach the
sparse solution to any extent, with the proper choice of step-size. Furthermore, the case of
inaccurate measurements in noisy scenario is also discussed. It is proved that disturbance power
linearly reduces the recovery precision, which is predictable but not preventable. The
reconstruction deviation of $p$-compressible signal is also provided.
Finally, numerical simulations are performed to verify the theoretical analysis.

\textbf{Keywords:} Compressive Sensing (CS), Zero-point Attracting Projection (ZAP), sparse signal
reconstruction, $\ell_1$ norm, convex optimization, convergence analysis, perturbation analysis,
$p$-compressible signal.
\end{abstract}

\section{Introduction}

\subsection{Overview of CS and Sparse Signal Recovery}

Compressive Sensing (CS)~\cite{compressivesampling,compressedsensing} is proposed as a novel
technique in the field of signal processing. Based on the sparsity of signals in some typical
domains, this method takes global measurements instead of samples in signal acquisition. The theory
of CS confirms that the measurements required for recovery are far fewer than conventional signal
acquisition technique.

With the advantages of sampling below Nyquist rate and little loss in reconstruction quality, CS
can be widely applied in the regions such as source coding~\cite{coding}, medical
imaging~\cite{MRI}, pattern recognition~\cite{pattern}, and wireless communication~\cite{wireless}.

Suppose that an $N$-dimensional vector ${\bf x}\in\mathbb{R}^N$ is a sparse signal with sparsity $S$,
which means that only $S$ entries of ${\bf x}$ are nonzero among all $N$ elements. An $M\times N$
measurement matrix ${\bf A}$ with $M<N$ is applied to take global measurements of ${\bf
x}$. Consequently an $M\times 1$ vector
\begin{equation}\label{yAx}
{\bf y}={\bf Ax}
\end{equation}
is obtained and the information of $N$-dimensional unknown signal is reduced to the
$M$-dimensional measurement vector. Exploiting the sparse property of ${\bf x}$, the original signal
can be reconstructed through ${\bf y}$ and ${\bf A}$.

The procedure of CS mainly includes two stages: signal measurement and signal reconstruction. The
key issues are the design of measurement matrix and the algorithm of sparse signal reconstruction,
respectively.

On the signal reconstruction of CS, a key problem is to derive the sparse solution, i.e., the
solution to the under-determined linear equation which has the minimal $\ell_0$ norm,
\begin{equation}\label{l0}
\min_{\bf x}\|{\bf x}\|_0, \quad\textrm{subject to } {\bf y=Ax}. \tag{$P_0$}
\end{equation}

However, (\ref{l0}) is a Non-deterministic Polynomial (NP) hard problem. It is
demonstrated that under certain conditions~\cite{compressedsensing}, (\ref{l0}) has the
same solution as the relaxed problem
\begin{equation}\label{l1}
\min_{\bf x}\|{\bf x}\|_1, \quad\textrm{subject to }{\bf y=Ax}.
\tag{$P_1$}
\end{equation}
(\ref{l1}) is a convex problem and can be solved through convex optimization.

In non-ideal scenarios, the measurement vector ${\bf y}$ is inaccurate with noise perturbation and
(\ref{yAx}) never satisfies exactly. Consequently, (\ref{l1}) is modified to
\begin{equation}\label{l1_eps}
\min_{\bf x}\|{\bf x}\|_1, \quad\textrm{subject to }\|{\bf y-Ax}\|_2\le
\varepsilon, \tag{$P_2$}
\end{equation}
where $\varepsilon$ is a positive number representing the energy of noise.

Many algorithms have been proposed to recover the sparse signal from $\bf y$ and $\bf A$. These
algorithms can be classified into several main categories, including greedy pursuit,
optimization algorithms, iterative thresholding algorithms and other
algorithms.

The greedy pursuit algorithms always choose the locally optimal approximation to the sparse
solution iteratively in each step. The computation complexity is low but more
measurements are needed for reconstruction. Typical algorithms include Matching Pursuit
(MP)~\cite{orthogonalmatching}, Orthogonal Matching Pursuit (OMP)~\cite{signalrecovery,omp},
Stage-wise OMP (StOMP)~\cite{sparsesolution}, Regularized OMP
(ROMP)~\cite{romp,signalrecoveryfrom}, Compressive Sampling MP (CoSaMP)~\cite{cosamp}, Subspace
Pursuit (SP)~\cite{subspacepursuit}, and Iterative Hard Thresholding (IHT)~\cite{iht}.

Optimization algorithms solve convex or non-convex problems and can be further divided into convex optimization and non-convex
optimization.
Convex optimization methods have the properties of fewer
measurements demanded, higher computation complexity, and more theoretical support in mathematics.
Convex optimization algorithms include Primal-Dual interior method for Convex Objectives (PDCO)~\cite{pdco}, Least Square QR (LSQR)~\cite{lsqr}, Large-scale $\ell_1$-regularized Least Squares
($\ell_1$-$ls$)~\cite{amethod}, Least Angle Regression (LARS)~\cite{LARS}, Gradient Projection for Sparse Reconstruction (GPSR)~\cite{gradientprojection}, Sparse Reconstruction by Separable Approximation
(SpaRSA)~\cite{sparsereconstruction}, Spectral Projected-Gradient $\ell_1$ (SPGL1)~\cite{spgl1}, Nesterov Algorithm (NESTA)~\cite{nesta} and Constrained Split Augmented Lagrangian Shrinkage Algorithm (C-SALSA)~\cite{csalsa}.

Non-convex optimization methods solve the problem of optimization by minimizing $\ell_p$ norm with
$0\le p<1$, which is not convex. This category of algorithms demands fewer measurements than convex
optimization methods. However, the non-convex property may lead to converging towards the local
extremum which is not the desired solution. Moreover, these methods have higher computation
complexity. Typical non-convex optimization methods are focal underdetermined system solver (FOCUSS)~\cite{focuss}, Iteratively Reweighted Least Square (IRLS)~\cite{lrls} and $\ell_0$ Analysis-based Sparsity (L0AbS)~\cite{l0abs}.

A new kind of method, Zero-point Attracting Projection (ZAP), has been recently proposed to solve
(\ref{l0}) or (\ref{l1})~\cite{astochastic}. The projection of the zero-point attracting term is
utilized to update the iterative solution in the solution space. Compared with the other
algorithms, ZAP has advantages of faster convergence rate, fewer measurements demanded, and a
better performance against noise.

However, ZAP is proposed with heuristic and experimental methodology and lacks a strict proof of
convergence~\cite{astochastic}. Though abundant computer simulations verify its performance, it is
still essential to prove its convergence, provide the specific working condition, and analyze
performances theoretically including the reconstruction precision, the convergence rate and the
noise resistance.

\subsection{Our Work}

This paper aims to provide a comprehensive analysis for ZAP. Specifically, it studies $\ell_1$-ZAP,
which uses the gradient of $\ell_1$ norm as the zero-point attracting term.
$\ell_0$-ZAP is non-convex and its convergence will be addressed in future work.

The main contribution of this work is to prove the convergence of $\ell_1$-ZAP in non-noisy
scenario. Our idea is summarized as follows. Firstly, the distance between the iterative solution
of $\ell_1$-ZAP and the original sparse signal is defined to evaluate the convergence. Then we
prove that such distance will decrease in each iteration, as long as it is larger than a constant
proportional to the step-size. Therefore, it is proved that $\ell_1$-ZAP is convergent to the
original sparse signal under non-noisy case, which provides a theoretical foundation for the
algorithm. Lemma 1 is the crucial contribution of this work, which
reveals the relationship between $\ell_1$ norm and $\ell_2$ norm in the solution space.

Another contribution is about the signal reconstruction with measurement noise. It is demonstrated
that $\ell_1$-ZAP can approach the original sparse signal to some extent under inaccurate
measurements. In the noisy case, the recovery precision is linear with not only the
step-size but also the energy of noise.

Other contributions include the discussions on some related topics. The convergence
rate is estimated as an upper bound of iteration number. The constraint of
initial value and its influence on convergence are provided. The convergence of
$\ell_1$-ZAP for $p$-compressible signal is also discussed. Experiment results are provided to
verify the analysis.

At the time of revising this paper, we are noticed of a similar algorithm called projected subgradient method~\cite{subgradient}, which leads to
some related researches~\cite{infeasible}. Though obtained from different frameworks,
$\ell_1$-ZAP shares the same recursion with the other. However, the two algorithms are not
exactly the same. The attracting term of ZAP is not restricted to the subgradient of a objective function, and can be used to solve either a convex
problem or a non-convex one, while only the subgradient of a convex function is
allowed in the mentioned method. Furthermore, the available analysis of the projected
subgradient method studies the convergence of the objective function, while this work focuses on the
properties of the iterative sequence, as derived from the significant Lemma 1. The theoretical analysis in this work may contribute to promoting the projected subgradient method.

The remainder of this paper is organized as follows. In Section II, some preliminary knowledge is
introduced to prepare for the main theorems. The main contribution in non-noisy scenario is
presented as Theorem 4 in Section III, which proves the convergence of $\ell_1$-ZAP. Some related
topics about Theorem 4 are also discussed in Section III. Section IV shows another main theorem in
noisy scenario, and some discussions are also brought out. Experiment results are shown in Section
V. The whole paper is concluded in Section VI.

\section{Preliminaries}

\subsection{RIP and Coherence}

In this subsection, Restricted Isometry Property (RIP) and coherence are introduced and
then some theorems on (\ref{l1}) and (\ref{l1_eps}) are presented, which will be helpful to the
following content.

\begin{definition}\cite{decodingby} Suppose ${\bf A}_{\mathcal T}$ is the $M\times |\mathcal T|$ submatrix by extracting the columns of
$M\times N$ matrix ${\bf A}$ corresponding to the indices in set $\mathcal T\subset
\{1,2,\dots,N\}$. The RIP constant $\delta_S$ is defined as the smallest nonnegative
quantity such that
$$
(1-\delta_S)\|{\bf c}\|_2^2\le \|{\bf A}_{\mathcal T}{\bf c}\|_2^2\le (1+\delta_S)\|{\bf c}\|_2^2
$$
holds for all subsets $\mathcal T$ with $|\mathcal T|\le S$ and vectors ${\bf
c}\in\mathbb{R}^{|\mathcal T|}$.
\end{definition}

\begin{theorem}\cite{therestricted}
If the RIP constant of matrix ${\bf A}$ satisfies the condition
\begin{equation}\label{eq23}
\delta_{2S}<\sqrt{2}-1,
\end{equation}
where $S$ is the sparsity of ${\bf x}$, then the solution of (\ref{l1}) is unique and
identical to the original signal.
\end{theorem}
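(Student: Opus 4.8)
The plan is to follow the by-now-standard null-space argument based on the RIP, reducing the statement to showing that the discrepancy between any minimizer and the true signal vanishes. Let ${\bf x}$ denote the original $S$-sparse signal, let ${\bf x}^\sharp$ be an arbitrary minimizer of (\ref{l1}), and set ${\bf h}={\bf x}^\sharp-{\bf x}$. Since both vectors are feasible, ${\bf A}{\bf h}={\bf 0}$, so ${\bf h}$ lies in the null space of ${\bf A}$, and it suffices to prove ${\bf h}={\bf 0}$; uniqueness then follows because the minimizer was arbitrary. Write $T_0$ for the support of ${\bf x}$, so $|T_0|\le S$, and partition the complement $T_0^c$ into blocks $T_1,T_2,\dots$ of size $S$ (the last possibly smaller), where $T_1$ indexes the $S$ entries of largest magnitude of ${\bf h}$ restricted to $T_0^c$, $T_2$ the next $S$ largest, and so on. Put $T_{01}=T_0\cup T_1$, so that $|T_{01}|\le 2S$.

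First I would extract the cone condition from optimality: from $\|{\bf x}^\sharp\|_1\le\|{\bf x}\|_1$, splitting the $\ell_1$ norms over $T_0$ and $T_0^c$ and using that ${\bf x}$ vanishes off $T_0$, the reverse triangle inequality yields $\|{\bf h}_{T_0^c}\|_1\le\|{\bf h}_{T_0}\|_1$. Next I would prove the tail estimate $\sum_{j\ge2}\|{\bf h}_{T_j}\|_2\le S^{-1/2}\|{\bf h}_{T_0^c}\|_1$, which is immediate from the sorting, since every coordinate of ${\bf h}$ in block $T_{j+1}$ is at most the average of the magnitudes in block $T_j$. Chaining these two facts with $\|{\bf h}_{T_0}\|_1\le\sqrt{S}\,\|{\bf h}_{T_0}\|_2\le\sqrt{S}\,\|{\bf h}_{T_{01}}\|_2$ gives the key consequence $\sum_{j\ge2}\|{\bf h}_{T_j}\|_2\le\|{\bf h}_{T_0}\|_2\le\|{\bf h}_{T_{01}}\|_2$.

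The heart of the argument is the RIP step. Since ${\bf A}{\bf h}={\bf 0}$, one has ${\bf A}{\bf h}_{T_{01}}=-\sum_{j\ge2}{\bf A}{\bf h}_{T_j}$, hence $\|{\bf A}{\bf h}_{T_{01}}\|_2^2=-\sum_{j\ge2}\left(\langle{\bf A}{\bf h}_{T_0},{\bf A}{\bf h}_{T_j}\rangle+\langle{\bf A}{\bf h}_{T_1},{\bf A}{\bf h}_{T_j}\rangle\right)$. The lower bound of Definition~1 gives $\|{\bf A}{\bf h}_{T_{01}}\|_2^2\ge(1-\delta_{2S})\|{\bf h}_{T_{01}}\|_2^2$, while the standard near-orthogonality consequence of RIP for vectors with disjoint supports of total size at most $2S$ gives $|\langle{\bf A}{\bf h}_{T_i},{\bf A}{\bf h}_{T_j}\rangle|\le\delta_{2S}\|{\bf h}_{T_i}\|_2\|{\bf h}_{T_j}\|_2$ for $i\in\{0,1\}$ and $j\ge2$. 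Combining these with $\|{\bf h}_{T_0}\|_2+\|{\bf h}_{T_1}\|_2\le\sqrt{2}\,\|{\bf h}_{T_{01}}\|_2$ and the tail estimate above yields $(1-\delta_{2S})\|{\bf h}_{T_{01}}\|_2^2\le\sqrt{2}\,\delta_{2S}\,\|{\bf h}_{T_{01}}\|_2^2$, i.e. $\left(1-(1+\sqrt{2})\delta_{2S}\right)\|{\bf h}_{T_{01}}\|_2^2\le0$. Under the hypothesis $\delta_{2S}<\sqrt{2}-1$ the coefficient $1-(1+\sqrt{2})\delta_{2S}$ is strictly positive, so ${\bf h}_{T_{01}}={\bf 0}$; in particular ${\bf h}_{T_0}={\bf 0}$, and then the cone condition forces $\|{\bf h}_{T_0^c}\|_1\le0$, whence ${\bf h}={\bf 0}$ and ${\bf x}^\sharp={\bf x}$.

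The step I expect to be the main obstacle is establishing the near-orthogonality inequality $|\langle{\bf A}{\bf u},{\bf A}{\bf v}\rangle|\le\delta_{2S}\|{\bf u}\|_2\|{\bf v}\|_2$ for disjointly supported ${\bf u},{\bf v}$ with $|\mathrm{supp}({\bf u})|+|\mathrm{supp}({\bf v})|\le 2S$: it comes from applying the RIP bounds of Definition~1 to ${\bf u}/\|{\bf u}\|_2\pm{\bf v}/\|{\bf v}\|_2$ together with the parallelogram identity, and the delicate point is the bookkeeping that keeps every index set involved within the size $2S$ that $\delta_{2S}$ controls --- which is precisely why the blocks $T_j$ are taken of size $S$ and why $\delta_{2S}$, rather than $\delta_S$ or $\delta_{3S}$, is the relevant RIP constant. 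The remaining pieces, namely the cone condition and the tail sorting estimate, are routine.
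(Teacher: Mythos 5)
Your proof is correct: it is exactly the standard null-space/RIP argument of Cand\`es in \cite{therestricted} (cone condition from optimality, block decomposition $T_0,T_1,T_2,\dots$ of size $S$, the parallelogram near-orthogonality bound, and the resulting inequality $(1-(1+\sqrt{2})\delta_{2S})\|{\bf h}_{T_{01}}\|_2^2\le 0$), specialized to the noiseless case. The paper itself does not prove this theorem but simply cites that reference, so your argument matches the source's proof in both structure and constants.
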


\begin{theorem}\cite{therestricted}
If the RIP constant of matrix ${\bf A}$ satisfies the condition
\begin{equation}\label{eq47}
\delta_{2S}<\sqrt{2}-1,
\end{equation}
then the solution ${\bf x}^{\star}$ of (\ref{l1_eps}) obeys
\begin{equation}\label{eq48}
\|{\bf x}^{\star}-{\bf x}^{\sharp}\|_2\le C_S \varepsilon,
\end{equation}
where ${\bf x}^{\sharp}$ is the original signal of sparsity $S$ and $C_S$ is a positive
constant related to $S$.
\end{theorem}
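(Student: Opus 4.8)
The plan is to follow the standard RIP-based stability argument: fix the true signal ${\bf x}^{\sharp}$ and the minimizer ${\bf x}^{\star}$ of $(P_2)$, set ${\bf h}={\bf x}^{\star}-{\bf x}^{\sharp}$, and bound $\|{\bf h}\|_2$ directly. Two ingredients drive everything. First, ${\bf x}^{\sharp}$ is feasible for $(P_2)$, i.e.\ $\|{\bf y}-{\bf A}{\bf x}^{\sharp}\|_2\le\varepsilon$ (the implicit standing assumption), so the triangle inequality $\|{\bf A}{\bf h}\|_2\le\|{\bf A}{\bf x}^{\star}-{\bf y}\|_2+\|{\bf y}-{\bf A}{\bf x}^{\sharp}\|_2\le2\varepsilon$ gives a \emph{tube} constraint. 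Second, the optimality of ${\bf x}^{\star}$ yields $\|{\bf x}^{\star}\|_1\le\|{\bf x}^{\sharp}\|_1$, which I will convert into a \emph{cone} constraint forcing most of the energy of ${\bf h}$ onto the support of ${\bf x}^{\sharp}$.

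For the cone constraint, let $T_0$ be the support of ${\bf x}^{\sharp}$, so $|T_0|\le S$. Splitting $\|{\bf x}^{\sharp}+{\bf h}\|_1\le\|{\bf x}^{\sharp}\|_1$ across $T_0$ and $T_0^c$ gives $\|{\bf h}_{T_0^c}\|_1\le\|{\bf h}_{T_0}\|_1$. I would then order the coordinates of ${\bf h}_{T_0^c}$ by decreasing magnitude and cut $T_0^c$ into successive blocks $T_1,T_2,\dots$ of size $S$; the usual shelling estimate then gives $\sum_{j\ge2}\|{\bf h}_{T_j}\|_2\le S^{-1/2}\|{\bf h}_{T_0^c}\|_1\le S^{-1/2}\|{\bf h}_{T_0}\|_1\le\|{\bf h}_{T_0}\|_2\le\|{\bf h}_{T_0\cup T_1}\|_2$. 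Thus the energy of ${\bf h}$ outside its $2S$ largest entries is dominated by the energy on those entries.

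The heart of the proof fuses the two constraints through the RIP. Decomposing ${\bf A}{\bf h}={\bf A}{\bf h}_{T_0\cup T_1}+\sum_{j\ge2}{\bf A}{\bf h}_{T_j}$ and pairing with ${\bf A}{\bf h}_{T_0\cup T_1}$, I would lower-bound the resulting quadratic term by $(1-\delta_{2S})\|{\bf h}_{T_0\cup T_1}\|_2^2$ via the RIP applied to the $2S$-sparse vector ${\bf h}_{T_0\cup T_1}$, and upper-bound the remaining terms using Cauchy--Schwarz with the tube bound for $\langle{\bf A}{\bf h}_{T_0\cup T_1},{\bf A}{\bf h}\rangle$, together with the near-orthogonality consequence of the RIP, $|\langle{\bf A}{\bf u},{\bf A}{\bf v}\rangle|\le\delta_{2S}\|{\bf u}\|_2\|{\bf v}\|_2$ for disjointly supported $S$-sparse ${\bf u},{\bf v}$, for the cross terms $\langle{\bf A}{\bf h}_{T_0\cup T_1},{\bf A}{\bf h}_{T_j}\rangle$ with $j\ge2$. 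Feeding in the shelling bound collapses everything to $\big(1-(\sqrt{2}+1)\delta_{2S}\big)\|{\bf h}_{T_0\cup T_1}\|_2\le2\sqrt{1+\delta_{2S}}\,\varepsilon$; finally $\|{\bf h}\|_2\le\|{\bf h}_{T_0\cup T_1}\|_2+\sum_{j\ge2}\|{\bf h}_{T_j}\|_2\le2\|{\bf h}_{T_0\cup T_1}\|_2$ delivers (\ref{eq48}) with $C_S=4\sqrt{1+\delta_{2S}}/\big(1-(\sqrt{2}+1)\delta_{2S}\big)$, which is finite precisely because (\ref{eq47}) forces $1-(\sqrt{2}+1)\delta_{2S}>0$.

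I expect the main obstacle to be the bookkeeping in this last combination step. Recovering the sharp coefficient $(\sqrt{2}+1)\delta_{2S}$, hence the threshold $\sqrt{2}-1$, requires (a) folding the first off-support block $T_1$ into the head $T_0\cup T_1$ so that the cone constraint bounds exactly the tail sum $\sum_{j\ge2}\|{\bf h}_{T_j}\|_2$, and (b) using $\|{\bf h}_{T_0}\|_2+\|{\bf h}_{T_1}\|_2\le\sqrt{2}\,\|{\bf h}_{T_0\cup T_1}\|_2$ inside the cross-term estimate, which is where the factor $\sqrt{2}$ enters. A minor point worth stating explicitly is the feasibility of ${\bf x}^{\sharp}$: if $\|{\bf y}-{\bf A}{\bf x}^{\sharp}\|_2>\varepsilon$, the bound (\ref{eq48}) must be weakened to one involving the true residual $\|{\bf y}-{\bf A}{\bf x}^{\sharp}\|_2$ in place of $\varepsilon$.
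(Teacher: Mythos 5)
Your proposal is correct, and it is essentially the proof of the cited source: the paper itself states this theorem without proof (citing Cand\`es's RIP paper), and your tube-plus-cone argument with the $S$-block shelling, the near-orthogonality estimate, and the $\sqrt{2}$ from $\|{\bf h}_{T_0}\|_2+\|{\bf h}_{T_1}\|_2\le\sqrt{2}\|{\bf h}_{T_0\cup T_1}\|_2$ is exactly that reference's derivation, yielding the right threshold $\delta_{2S}<\sqrt{2}-1$ and the constant $C_S=4\sqrt{1+\delta_{2S}}/\bigl(1-(\sqrt{2}+1)\delta_{2S}\bigr)$. Your closing remark about the feasibility of ${\bf x}^{\sharp}$ (i.e.\ $\|{\bf y}-{\bf A}{\bf x}^{\sharp}\|_2\le\varepsilon$) is indeed the implicit standing assumption in the noisy model, so no gap remains.
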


RIP determines the property of the measurement matrix. Recent results on RIP can be found
in~\cite{improvedbounds,howsharp}.

\begin{definition}\cite{sparsityand}
The coherence of an $M\times N$ matrix ${\bf A}$ is defined as
$$
\mu({\bf A})=\max_{i\neq j}{|{\boldsymbol \alpha}_i^{\rm T}{\boldsymbol \alpha}_j|},
$$
where ${\boldsymbol \alpha}_i(1\le i \le N)$ is the $i$th column of
${\bf A}$ and $\|{\boldsymbol \alpha}_i\|_2=1$.
\end{definition}

\begin{theorem}\cite{gradientprojection,coherencebased}
If the sparsity $S$ of ${\bf x}$ and the coherence of matrix ${\bf A}$ satisfy the condition
\begin{equation}\label{eq49}
S<\frac{1}{3\mu({\bf A})},
\end{equation}
then the solution of (\ref{l1_eps}) is unique.
\end{theorem}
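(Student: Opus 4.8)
The plan is to argue by contradiction: assuming $(P_2)$ has two distinct minimizers, I would first use convexity to show that ${\bf A}{\bf x}$ is constant over the solution set — reducing the question to uniqueness for a \emph{noiseless} $\ell_1$ problem — and then eliminate the remaining freedom with the coherence bound $S<1/(3\mu({\bf A}))$. The degenerate cases are immediate: if $\|{\bf y}\|_2\le\varepsilon$ then ${\bf x}={\bf 0}$ is feasible and is the unique global minimizer of $\|\cdot\|_1$, and if the optimal value of $(P_2)$ is $0$ then every minimizer has zero $\ell_1$ norm and hence equals ${\bf 0}$. So assume $\|{\bf y}\|_2>\varepsilon$ and optimal value $c>0$, and suppose ${\bf x}_1^\star\neq{\bf x}_2^\star$ are both minimizers. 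The feasible set $\{{\bf x}:\|{\bf y}-{\bf A}{\bf x}\|_2\le\varepsilon\}$ is convex (a preimage of a Euclidean ball under a linear map) and $\|\cdot\|_1$ is convex, so the midpoint ${\bf m}=({\bf x}_1^\star+{\bf x}_2^\star)/2$ is again a minimizer, with $\|{\bf m}\|_1=c>0$.

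Next I would note that at any \emph{nonzero} minimizer the constraint must be active: otherwise the path $(1-t){\bf x}$ stays feasible for small $t>0$ by continuity while strictly decreasing $\|\cdot\|_1$, contradicting optimality. Thus the residuals ${\bf r}_i={\bf A}{\bf x}_i^\star-{\bf y}$ ($i=1,2$) and $({\bf r}_1+{\bf r}_2)/2={\bf A}{\bf m}-{\bf y}$ all have $\ell_2$ norm exactly $\varepsilon$; since the Euclidean norm is strictly convex, equality in the triangle inequality between two vectors of equal positive norm forces ${\bf r}_1={\bf r}_2$, i.e. ${\bf A}{\bf x}_1^\star={\bf A}{\bf x}_2^\star=:{\bf z}_0$. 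Then every minimizer of $(P_2)$ solves the noiseless problem $\min_{\bf x}\|{\bf x}\|_1$ subject to ${\bf A}{\bf x}={\bf z}_0$ (whose optimal value is also $c$, since any such ${\bf x}$ is feasible for $(P_2)$), so uniqueness for $(P_2)$ is equivalent to uniqueness for this noiseless problem, and ${\bf h}:={\bf x}_1^\star-{\bf x}_2^\star$ is a nonzero element of $\ker{\bf A}$.

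The crux is to rule out ${\bf h}\neq{\bf 0}$ by coherence. Since the whole segment $\{{\bf x}_1^\star-t{\bf h}:t\in[0,1]\}$ is optimal for the noiseless problem, $g(t):=\|{\bf x}_1^\star-t{\bf h}\|_1$ is constant on $[0,1]$; from $g'(0^+)=0$ (and the absence of a breakpoint of $g$ in $(0,1)$) one gets $\|{\bf h}\|_1\le 2\,\|{\bf h}_{\operatorname{supp}({\bf x}_1^\star)}\|_1$. On the other hand, dotting ${\bf A}{\bf h}={\bf 0}$ with each (unit-norm) column gives $|h_j|\le\frac{\mu({\bf A})}{1+\mu({\bf A})}\|{\bf h}\|_1$ for every $j$, equivalently $\operatorname{spark}({\bf A})\ge 1+1/\mu({\bf A})$, so ${\bf A}$ is injective on any index set of size $<1+1/\mu({\bf A})$; and the standard exact-recovery-condition estimate produces a \emph{strict} dual certificate for any vector whose support has size below $\tfrac12(1+1/\mu({\bf A}))$. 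These two facts are the classical sufficient condition for a unique $\ell_1$ minimizer. The hypothesis $S<1/(3\mu({\bf A}))$ enters by combining with the feasibility of the true $S$-sparse ${\bf x}^\sharp$ and the inequality $\|{\bf x}_i^\star\|_1\le\|{\bf x}^\sharp\|_1$: the latter puts ${\bf v}_i:={\bf x}_i^\star-{\bf x}^\sharp$ in the cone $\|({\bf v}_i)_{T^c}\|_1\le\|({\bf v}_i)_T\|_1$ with $|T|=S$, and — as in Theorem 2, but now via a Gershgorin bound on the restricted Gram matrix of ${\bf A}$ — this confines the minimizers near ${\bf x}^\sharp$ tightly enough that the coherence bounds above apply to $\operatorname{supp}({\bf x}_i^\star)$ and $\operatorname{supp}({\bf h})$, forcing ${\bf h}={\bf 0}$, a contradiction.

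The main obstacle is precisely this last step, the matching of the constant $\tfrac13$. The crude bound $\delta_S\le(S-1)\mu({\bf A})$ shows that $S<1/(3\mu({\bf A}))$ does \emph{not} imply $\delta_{2S}<\sqrt2-1$, so the statement is not a corollary of Theorems 1--2; one has to run the coherence estimates directly, carefully balancing the $\ell_1$--$\ell_2$ Cauchy--Schwarz losses on the cone against the Gershgorin and exact-recovery bounds on the restricted Gram matrix, and it is there that the threshold $1/(3\mu({\bf A}))$ — rather than the weaker $1/(2\mu({\bf A}))$ that already suffices for uniqueness of the solution of $(P_0)$ — gets consumed; the detailed computation is as in the cited references.
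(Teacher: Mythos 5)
There is no in-paper proof to compare against: the paper states this result (its Theorem~3) as a quotation from \cite{gradientprojection,coherencebased} and never proves it, and your write-up ultimately does the same thing, since the decisive step is explicitly deferred to ``the cited references.'' The portions you do carry out are correct and genuinely useful as a reduction: the degenerate case $\|{\bf y}\|_2\le\varepsilon$, the activeness of the constraint at any nonzero minimizer, the strict-convexity argument forcing ${\bf A}{\bf x}_1^\star={\bf A}{\bf x}_2^\star$ so that ${\bf h}={\bf x}_1^\star-{\bf x}_2^\star\in\ker{\bf A}$, the first-order identity $\|{\bf h}_{T_1^{\rm c}}\|_1\le\|{\bf h}_{T_1}\|_1$ with $T_1=\operatorname{supp}({\bf x}_1^\star)$, and the coherence estimate $|h_j|\le\frac{\mu({\bf A})}{1+\mu({\bf A})}\|{\bf h}\|_1$ are all sound.

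The genuine gap is the final step, which is exactly where the hypothesis $S<1/(3\mu({\bf A}))$ must be used. Combining your two inequalities gives a contradiction only if $|T_1|<\tfrac12\bigl(1+1/\mu({\bf A})\bigr)$, i.e.\ only if a \emph{minimizer of} (\ref{l1_eps}) is itself that sparse. But the hypothesis bounds the sparsity $S$ of the unknown original signal ${\bf x}^{\sharp}$, not of the minimizers: in the noisy case a minimizer of (\ref{l1_eps}) can have support of size up to $M$, while $\tfrac12(1+1/\mu({\bf A}))$ is at most of order $\sqrt{M}$ by the Welch bound, so the needed cardinality bound is simply not available. Your proposed bridge --- that the cone condition $\|{\bf v}_{T^{\rm c}}\|_1\le\|{\bf v}_T\|_1$ and a Gershgorin bound ``confine the minimizers near ${\bf x}^{\sharp}$ tightly enough that the coherence bounds apply to $\operatorname{supp}({\bf x}_i^\star)$ and $\operatorname{supp}({\bf h})$'' --- does not close this: closeness of ${\bf x}_i^\star$ to ${\bf x}^{\sharp}$ in $\ell_2$, or membership of ${\bf x}_i^\star-{\bf x}^{\sharp}$ in that cone, gives no control on the cardinality of $\operatorname{supp}({\bf x}_i^\star)$ or of $\operatorname{supp}({\bf h})$, which is what your chain of inequalities requires; nor can you apply the sparsity bound to ${\bf x}^{\sharp}$ instead, since ${\bf x}^{\sharp}$ need not satisfy ${\bf A}{\bf x}={\bf z}_0$. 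So the crux --- where the constant $1/3$ is actually consumed --- remains unproved, and the sketched route for it would fail; a complete argument has to control the solution set of (\ref{l1_eps}) directly (e.g.\ via its optimality/dual-certificate conditions relative to the true support, as in the coherence analysis of the cited references), which is precisely the content you leave to them.
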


Theorem 1 provides the sufficient condition on exact recovery of the original signal without any
perturbation. It is also a loose sufficient condition of the unique solution of (\ref{l1}). Theorem
2 indicates that under the condition (\ref{eq47}), the solution of (\ref{l1_eps}) is not too far
from the original signal, with a deviation proportional to the energy of measurement noise.
Theorem 3 provides a sufficient condition of the uniqueness of the solution of (\ref{l1_eps}).

\subsection{$\ell_1$-ZAP}

In ZAP algorithm, the zero-point attracting term is used to update the iterative solution and then
the updated iterative solution is projected to the solution space. The procedures of ZAP can be
summarized as follows.

\vspace{0.5em}

Input: ${\bf A}\in \mathbb{R}^{M\times N},\ {\bf y}\in\mathbb{R}^M,\ \gamma\in\mathbb{R}_{+}$.

Initialization: $n = 0$ and ${\bf x}_0 = {\bf A}^{\dagger}{\bf y}$.

Iteration:

\quad while stop condition is not satisfied

\quad 1. Zero-point attraction:
\begin{equation}\label{eq11}
\hat{\bf x}_{n+1}={\bf x}_n-\gamma\cdot \nabla {{\rm F}({\bf x}_n)}
\end{equation}

\quad 2. Projection:
\begin{equation}\label{eq22}
{\bf x}_{n+1}=\hat{\bf x}_{n+1}+{\bf A}^{\dagger}({\bf y}-{\bf A}\hat{\bf x}_{n+1})
\end{equation}

\quad 3. Update the index: $n=n+1$

\quad end while

\vspace{0.5em}

In the initialization and (\ref{eq22}), ${\bf A}^{\dagger}={\bf A}^{\rm T}({\bf AA}^{\rm T})^{-1}$
denotes the pseudo-inverse of ${\bf A}$. In (\ref{eq11}), $\nabla {{\rm F}({\bf x}_n)}$ is the
zero-point attracting term, where ${\rm F}({\bf x})$ is a function representing the sparse penalty
of vector ${\bf x}$. Positive parameter $\gamma$ denotes the step-size in the step of zero-point
attraction.

ZAP was firstly proposed in \cite{astochastic} with a specification of $\ell_0$-norm constraint,
termed $\ell_0$-ZAP, in which the approximate $\ell_0$ norm is utilized as the function
F(x). $\ell_0$-ZAP belongs to the non-convex optimization methods and has an outstanding
performance beyond conventional algorithms. In \cite{astochastic}, the penalty function is $\|{\bf x}\|_0$ and its gradient is approximated as
$$
\nabla {{\rm F}_{\ell_0}({\bf x})}\approx [{\rm f}(x_1),{\rm f}(x_2),\cdots,{\rm f}(x_N)]^{\rm T}
$$
and
$$
{\rm f}(x)=\left\{
    \begin{array}{cl} -\alpha^2x-\alpha, & -\frac{1}{\alpha}\le x< 0; \\
    -\alpha^2x+\alpha, & 0<x\le\frac{1}{\alpha}; \\
    0, & {\rm elsewhere}. \end{array} \right.
$$
The piecewise and non-convex zero-point attracting term further increases the difficulty to theoretically analyze the convergence of $\ell_0$-ZAP.

As another variation of ZAP, $\ell_1$-ZAP is analyzed in this work. The function ${\rm F}({\bf x})$
is the $\ell_1$ norm of ${\bf x}$ in the zero-point attracting term. Since it is
non-differentiable, the gradient of ${\rm F}({\bf x})$ can be replaced by its sub-gradient.
Considering that the gradient of ${\rm F}({\bf x})$ is ${\rm sgn}({\bf x})$ when none of the
components of ${\bf x}$ are zero, (\ref{eq11}) can be specified as
\begin{equation}\label{eq62}
\hat{\bf x}_{n+1}={\bf x}_n-\gamma\cdot {\rm sgn}({\bf x}_n),
\end{equation}
where the gradient is replaced by one of the sub-gradients ${\rm sgn}({\bf x})$. The sign function
${\rm sgn}({\bf x})$ has the same size with ${\bf x}$ and each entry of ${\rm sgn}({\bf x})$ is the
scalar sign function of the corresponding entry of ${\bf x}$.

Experiments show that though its performance is better than conventional algorithms, $\ell_1$-ZAP
behaves not as good as $\ell_0$ norm constraint variation. However, as a convex optimization
method, $\ell_1$-ZAP has advantages beyond non-convex methods, as mentioned in introduction.
$\ell_1$-ZAP is considered in this paper as the first attempt to analyze ZAP in theory.

The steps (\ref{eq62}) and (\ref{eq22}) of $\ell_1$-ZAP can be combined into the following
recursion
\begin{equation}\label{eq5}
{\bf x}_{n+1}={\bf x}_n-\gamma {\bf P}{\rm sgn}({\bf x}_n)
\end{equation}
with the projection matrix
\begin{equation}\label{eq4}
{\bf P=I}-{\bf A}^{\rm T}({\bf AA}^{\rm T})^{-1}{\bf A}.
\end{equation}

Notice that following (\ref{eq5}), (\ref{eq4}) and the initialization, the sequence has the
property
\begin{equation}\label{eq57}
{\bf A}{\bf x}_{n+1}={\bf A}{\bf x}_n = {\bf A}{\bf x}_0 ={\bf y}, \quad\forall n\ge0,
\end{equation}
which means all iterative solutions fall in the solution space.

Numerical simulations demonstrate that the sparse solution of under-determined linear
equation can be calculated by $\ell_1$-ZAP. In fact, the sequence $\{{\bf x}_n\}$
calculated through (\ref{eq5}) is not strictly  convergent. $\{{\bf x}_n\}$ will fall
into the neighborhood of ${\bf x}^*$ after finite iterations, with radius proportional to step-size
$\gamma$. With the
increasing of iterations, ${\bf x}_n$ approaches ${\bf x}^*$ step by step at first.
However, it vibrates in the neighborhood of ${\bf x}^*$ when ${\bf x}_n$ is close enough
to ${\bf x}^*$. If the step-size $\gamma$ decreases, the radius of neighborhood also decreases.
Consequently, one can get the approximation to the sparse solution at any precision by choosing
appropriate step-size.

In this work the convergence of $\ell_1$-ZAP is proved. The main results are the following theorems
in Section III and IV, corresponding to non-noisy scenario and noisy scenario, respectively.

\section{Convergence in Non-Noisy Scenario}

The main contribution is included in this section. A lemma is proposed in Subsection A for
preparing the main theorem in Subsection B. Then the condition of exact signal recovery by
$\ell_1$-ZAP is given in Subsection C. Several constants and variables in the proof of
convergence are discussed in Subsections D and E. In Subsection F, an estimation on
the convergence rate is given. The initial value of $\ell_1$-ZAP is discussed in Subsection
G.

\subsection{Lemma}
\begin{lemma}\label{Lemma2}
Suppose that ${\bf x}\in\mathbb{R}^N$ satisfies ${\bf y}={\bf Ax}$, with given ${\bf
A}\in\mathbb{R}^{M\times N}$ and ${\bf y}\in\mathbb{R}^M$. ${\bf x}^*$ is the unique solution of
(\ref{l1}). If $\|{\bf x}-{\bf x}^*\|_2$ is bounded by a positive constant $M_0$, then there exists
a uniform positive constant $t$ depending on ${\bf A, y,}$ and $M_0$, such that
\begin{equation}\label{eqinlemma2}
\|{\bf x}\|_1-\|{\bf x}^*\|_1\ge t\|{\bf x}-{\bf x}^*\|_2
\end{equation}
holds for arbitrary ${\bf x}$ satisfying ${\bf y}={\bf Ax}$.
\end{lemma}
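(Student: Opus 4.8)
The plan is to argue by contradiction using compactness of the feasible set restricted to a bounded region. First I would fix the affine solution set $\mathcal{F} = \{{\bf x} \in \mathbb{R}^N : {\bf A}{\bf x} = {\bf y}\}$ and consider the closed bounded subset $\mathcal{F}_{M_0} = \{{\bf x} \in \mathcal{F} : \|{\bf x} - {\bf x}^*\|_2 \le M_0\}$, which is compact. On $\mathcal{F}_{M_0} \setminus \{{\bf x}^*\}$ define the ratio
\begin{equation}
g({\bf x}) = \frac{\|{\bf x}\|_1 - \|{\bf x}^*\|_1}{\|{\bf x} - {\bf x}^*\|_2}.
\end{equation}
Since ${\bf x}^*$ is the \emph{unique} minimizer of (\ref{l1}) over $\mathcal{F}$, we have $\|{\bf x}\|_1 - \|{\bf x}^*\|_1 > 0$ for every ${\bf x} \in \mathcal{F}$ with ${\bf x} \ne {\bf x}^*$, so $g$ is strictly positive wherever it is defined. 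The goal is to show $\inf g = t > 0$, which gives (\ref{eqinlemma2}) on $\mathcal{F}_{M_0}$; the claim for all ${\bf x} \in \mathcal{F}$ then follows because if $\|{\bf x} - {\bf x}^*\|_2 > M_0$ one can instead apply the bound to the point ${\bf x}' = {\bf x}^* + \frac{M_0}{\|{\bf x}-{\bf x}^*\|_2}({\bf x} - {\bf x}^*) \in \mathcal{F}_{M_0}$ and use convexity of $\|\cdot\|_1$ along the segment to transfer the inequality (the chord from ${\bf x}^*$ lies below the value at ${\bf x}$, scaled appropriately), so establishing it on the bounded piece suffices.

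The only obstacle to a direct "continuous function on a compact set attains its minimum" argument is that $g$ is not defined at ${\bf x}^*$ and need not extend continuously there — the numerator and denominator both vanish. So I would handle a neighborhood of ${\bf x}^*$ separately. Take any minimizing sequence $\{{\bf x}_k\} \subset \mathcal{F}_{M_0} \setminus \{{\bf x}^*\}$ with $g({\bf x}_k) \to \inf g$. By compactness, pass to a subsequence with ${\bf x}_k \to \bar{\bf x} \in \mathcal{F}_{M_0}$. If $\bar{\bf x} \ne {\bf x}^*$, then by continuity $\inf g = g(\bar{\bf x}) > 0$ and we are done. The remaining case is ${\bf x}_k \to {\bf x}^*$, and this is where the real content lies: I must show $\liminf_k g({\bf x}_k) > 0$. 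Write ${\bf x}_k = {\bf x}^* + r_k {\bf u}_k$ with $r_k = \|{\bf x}_k - {\bf x}^*\|_2 \to 0$ and $\|{\bf u}_k\|_2 = 1$, ${\bf A}{\bf u}_k = {\bf 0}$; pass to a further subsequence so ${\bf u}_k \to {\bf u}$ with $\|{\bf u}\|_2 = 1$, ${\bf A}{\bf u} = {\bf 0}$. Then $g({\bf x}_k) = \big(\|{\bf x}^* + r_k{\bf u}_k\|_1 - \|{\bf x}^*\|_1\big)/r_k$, which is the forward difference quotient of the convex function $r \mapsto \|{\bf x}^* + r{\bf u}_k\|_1$ at $0$; as $r_k \to 0$ (and ${\bf u}_k \to {\bf u}$) this converges to the one-sided directional derivative $D_{\bf u}\|\cdot\|_1({\bf x}^*) = \sum_{i: x^*_i \ne 0} \operatorname{sgn}(x^*_i) u_i + \sum_{i: x^*_i = 0} |u_i|$.

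It remains to show this directional derivative is strictly positive for every unit vector ${\bf u} \in \ker {\bf A}$. This is exactly the first-order optimality condition for ${\bf x}^*$ being the \emph{strict} (unique) minimizer of (\ref{l1}) over the affine set $\mathcal{F}$: for a convex function, ${\bf x}^*$ is the unique minimizer iff its directional derivative in every feasible direction is positive — if it were zero (or negative) in some direction ${\bf u} \in \ker{\bf A}$, then, since $\|\cdot\|_1$ is piecewise linear, the objective would be nonincreasing along the \emph{entire} ray ${\bf x}^* + r{\bf u}$ for $r \ge 0$, producing another feasible point with objective value $\le \|{\bf x}^*\|_1$, contradicting uniqueness. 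So $D_{\bf u}\|\cdot\|_1({\bf x}^*) > 0$ on the unit sphere of $\ker{\bf A}$; by continuity of ${\bf u} \mapsto D_{\bf u}\|\cdot\|_1({\bf x}^*)$ and compactness of that sphere, it has a positive minimum $t_0$. Hence $\liminf_k g({\bf x}_k) \ge t_0 > 0$, which contradicts $g({\bf x}_k) \to \inf g$ unless $\inf g > 0$ to begin with. Setting $t = \min\{\inf_{\mathcal{F}_{M_0}\setminus\{{\bf x}^*\}} g,\ t_0\} > 0$ — or more cleanly, observing the infimum is attained and positive in all cases — completes the proof. The main obstacle, as flagged, is the behavior near ${\bf x}^*$; the piecewise-linearity of $\|\cdot\|_1$ is what rescues it, by converting the local directional derivative condition into a global statement about the ray.
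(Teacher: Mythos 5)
Your argument is correct and takes essentially the same route as the paper's: away from ${\bf x}^*$ you use continuity of the ratio on a compact subset of the solution space plus uniqueness, and near ${\bf x}^*$ you reduce to the directional-derivative function ${\bf u}\mapsto \sum_{k\in\mathcal I}u_k\,{\rm sgn}(x_k^*)+\sum_{k\notin\mathcal I}|u_k|$ (with $\mathcal I$ the support of ${\bf x}^*$) on the compact unit sphere of $\ker{\bf A}$, which is exactly the paper's ${\rm G}({\bf u})$, and whose strict positivity again comes from uniqueness of ${\bf x}^*$. The only slip is the claim that a nonpositive directional derivative would make $r\mapsto\|{\bf x}^*+r{\bf u}\|_1$ nonincreasing along the \emph{entire} ray (false in general, e.g.\ $|1-r|$), but since this map is piecewise linear its initial linear segment has slope equal to that directional derivative, which already produces a second feasible point with $\ell_1$ norm at most $\|{\bf x}^*\|_1$ and hence the contradiction you need, so the proof stands.
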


The outline of the proof is presented here while the details are included in Appendix A.

\begin{proof}
By defining
\begin{equation}\label{eq1}
{\rm g}({\bf x}) = \frac{\|{\bf x}\|_1-\|{\bf x}^*\|_1}{\|{\bf x}-{\bf x}^*\|_2},
\end{equation}
equation (\ref{eqinlemma2}) is equivalent to the following inequality
\begin{align}\label{eq6}
&\inf_{\bf x}{\rm g}({\bf x})>0,\quad\text{subject to}\;{\bf y=Ax}\;\text{and}\;0<\|{\bf x}-{\bf x}^*\|_2\le M_0.
\end{align}
Define the index set $\mathcal I=\{k ~|~ x^*_k\neq 0, 1\le k\le N\}$, then there exists a
positive constant $r_0$ such that $\left({\rm sgn}({\bf x})\right)_{\mathcal I}=\left({\rm
sgn}({\bf x}^*)\right)_{\mathcal I}$, when ${\bf x}$ satisfies
\begin{equation}\label{eq32}
\|{\bf x}-{\bf x}^*\|_2<r_0.
\end{equation}
The above proposition means that ${\bf x}$ and ${\bf x}^*$ share the same sign for the entries
indexed by $\mathcal I$. Define sets ${\mathcal X}_1$ and ${\mathcal X}_2$ as
\begin{align}\label{eq15}
{\mathcal X}_1&=\{{\bf x}~|~r_0\le\|{\bf x}-{\bf x}^*\|_2\le M_0\}\cap\{{\bf x}~|~{\bf y=Ax}\},\nonumber\\
{\mathcal X}_2&=\{{\bf x}~|~0<\|{\bf x}-{\bf x}^*\|_2<r_0\}\cap\{{\bf x}~|~{\bf y=Ax}\}.
\end{align}
Consequently, for the separate cases of ${\bf x}\in {\mathcal X}_1$ and ${\bf x}\in {\mathcal
X}_2$, it is proved that ${\rm g}({\bf x})$ has a positive lower bound, respectively. Combining
the two cases,  Lemma \ref{Lemma2} is proved.
\end{proof}

\subsection{Main Result}
\begin{theorem}\label{Theorem4}
Suppose that ${\bf x}^*$ is the unique solution of (\ref{l1}). ${\bf x}_{n+1}$ and ${\bf x}_n$
satisfy the recursion (\ref{eq5}) and ${\bf x}_n$ is energy constrained by $\|{\bf x}_n-{\bf
x}^*\|_2\le M_0$, where $M_0$ is a positive constant. Then the iteration obeys
\begin{equation}\label{eq9}
\|{\bf x}_{n+1}-{\bf x}^*\|_2^2\le \|{\bf x}_n-{\bf x}^*\|_2^2-d \gamma^2
\end{equation}
when
\begin{equation}\label{eq99}
\|{\bf x}_n-{\bf x}^*\|_2\ge K \gamma,
\end{equation}
where
\begin{align}\label{eq25}
K&=\frac{\mu}{2t}\max_{{\bf x}\in\mathbb{R}^N}{\|{\bf P}{\rm sgn}({\bf x})\|_2^2},\\
\label{eq26}
d&=(\mu-1)\max_{{\bf x}\in\mathbb{R}^N}\|{\bf P}{\rm sgn}({\bf x})\|_2^2
\end{align}
are two constants with a parameter $\mu > 1$, and $t>0$ denotes the lower bound specified in
Lemma~1.
\end{theorem}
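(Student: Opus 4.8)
The plan is to estimate the squared distance directly from the recursion (\ref{eq5}). Writing ${\bf e}_n = {\bf x}_n - {\bf x}^*$ and ${\bf v}_n = {\bf P}\,{\rm sgn}({\bf x}_n)$, the update gives ${\bf e}_{n+1} = {\bf e}_n - \gamma {\bf v}_n$, so
\begin{equation}\label{eq:expand}
\|{\bf e}_{n+1}\|_2^2 = \|{\bf e}_n\|_2^2 - 2\gamma\,{\bf e}_n^{\rm T}{\bf v}_n + \gamma^2 \|{\bf v}_n\|_2^2.
\end{equation}
The crux is to bound the cross term $\,{\bf e}_n^{\rm T}{\bf v}_n = ({\bf x}_n - {\bf x}^*)^{\rm T}{\bf P}\,{\rm sgn}({\bf x}_n)\,$ from below. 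Since both ${\bf x}_n$ and ${\bf x}^*$ lie in the solution space $\{{\bf x} : {\bf Ax} = {\bf y}\}$, the difference ${\bf e}_n$ satisfies ${\bf A}{\bf e}_n = {\bf 0}$, and ${\bf P}$ is the orthogonal projector onto exactly that null space (this follows from (\ref{eq4}) and (\ref{eq57})); hence ${\bf P}{\bf e}_n = {\bf e}_n$ and ${\bf e}_n^{\rm T}{\bf P}\,{\rm sgn}({\bf x}_n) = {\bf e}_n^{\rm T}{\rm sgn}({\bf x}_n)$. Now the convexity of the $\ell_1$ norm, together with ${\rm sgn}({\bf x}_n)$ being a subgradient of $\|\cdot\|_1$ at ${\bf x}_n$, yields
$$
\|{\bf x}^*\|_1 \ge \|{\bf x}_n\|_1 + {\rm sgn}({\bf x}_n)^{\rm T}({\bf x}^* - {\bf x}_n),
$$
i.e. ${\bf e}_n^{\rm T}{\rm sgn}({\bf x}_n) \ge \|{\bf x}_n\|_1 - \|{\bf x}^*\|_1$. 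Applying Lemma~\ref{Lemma2} (legitimate because $\|{\bf e}_n\|_2 \le M_0$ and ${\bf A}{\bf x}_n = {\bf y}$), we get the clean lower bound
\begin{equation}\label{eq:cross}
{\bf e}_n^{\rm T}{\bf v}_n \ge t\,\|{\bf e}_n\|_2.
\end{equation}

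Next I substitute (\ref{eq:cross}) into (\ref{eq:expand}) and abbreviate $\beta = \max_{{\bf x}\in\mathbb{R}^N}\|{\bf P}\,{\rm sgn}({\bf x})\|_2^2$, so $\|{\bf v}_n\|_2^2 \le \beta$. This gives
$$
\|{\bf e}_{n+1}\|_2^2 \le \|{\bf e}_n\|_2^2 - 2\gamma t\,\|{\bf e}_n\|_2 + \gamma^2 \beta.
$$
To reach the stated conclusion (\ref{eq9}) with the deficit $d\gamma^2 = (\mu-1)\beta\gamma^2$, it suffices to show that $2\gamma t\,\|{\bf e}_n\|_2 \ge \gamma^2\beta + (\mu-1)\beta\gamma^2 = \mu\beta\gamma^2$, that is, $\|{\bf e}_n\|_2 \ge \dfrac{\mu\beta}{2t}\,\gamma = K\gamma$, which is precisely the standing hypothesis (\ref{eq99}). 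Chaining the two inequalities closes the argument.

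The only genuinely nontrivial input is the lower bound (\ref{eq:cross}), and that rests entirely on Lemma~\ref{Lemma2}, which I am entitled to assume; the rest is the one-line expansion (\ref{eq:expand}), the identification of ${\bf P}$ as the null-space projector so the cross term simplifies, the subgradient inequality for $\|\cdot\|_1$, and an elementary rearrangement. The step I expect to need the most care in writing up is verifying that ${\rm sgn}({\bf x}_n)$ is indeed a valid subgradient of the $\ell_1$ norm at ${\bf x}_n$ even when some components of ${\bf x}_n$ vanish (each component of ${\rm sgn}(0)$ is taken to be $0 \in [-1,1]$, so the subgradient condition holds componentwise), and — if one wants the theorem to be fully self-contained — confirming that the maximum $\beta = \max_{{\bf x}}\|{\bf P}\,{\rm sgn}({\bf x})\|_2^2$ is finite, which is immediate since ${\rm sgn}({\bf x})$ ranges over a finite set of sign-pattern vectors in $\{-1,0,1\}^N$ and ${\bf P}$ has operator norm $1$, whence $\beta \le N$.
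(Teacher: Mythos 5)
Your proposal is correct and follows essentially the same route as the paper's proof: expand $\|{\bf x}_{n+1}-{\bf x}^*\|_2^2$ from the recursion, use ${\bf P}({\bf x}_n-{\bf x}^*)={\bf x}_n-{\bf x}^*$ to drop the projector in the cross term, lower-bound that term by $t\|{\bf x}_n-{\bf x}^*\|_2$ via Lemma~1, and rearrange under the threshold condition (\ref{eq99}). The only cosmetic difference is that you invoke the $\ell_1$ subgradient inequality where the paper argues componentwise that $({\bf x}^*)^{\rm T}{\rm sgn}({\bf x}_n)\le\|{\bf x}^*\|_1$ and ${\bf x}_n^{\rm T}{\rm sgn}({\bf x}_n)=\|{\bf x}_n\|_1$, which is the same estimate.
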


For a given under-determined constraint (\ref{yAx}) and the unique sparsest solution of (\ref{l1}),
Theorem \ref{Theorem4} demonstrates the convergence property and provides the convergence
conditions of $\ell_1$-ZAP. As long as the iterative result ${\bf x}_n$ is far away from the sparse
solution ${\bf x}^*$, the new result ${\bf x}_{n+1}$ in next iteration affirmatively becomes closer
than its predecessor. Furthermore, the decrease in $\ell_2$ distance is a constant $d\gamma^2$,
which means ${\bf x}_n$ will definitely get into the $(K\gamma)$-neighborhood of ${\bf x}^*$ in
finite iterations. According to the definition of $K$, ${\bf x}_n$ can
approach the sparse solution ${\bf x}^*$ to any extent if the step-size $\gamma$ is chosen small
enough. Therefore, $\ell_1$-ZAP is convergent, i.e., the iterative result can get close to the
sparse solution at any precision. Here $\mu$ is a tradeoff parameter which balances the estimated
precision and convergence rate.

The proof of Theorem \ref{Theorem4} goes in Appendix B.

\subsection{Exact Signal Recovery by $\ell_1$-ZAP}

Using Theorem 4 and conditions added, the convergence of $\ell_1$-ZAP can be deduced, as the
following corollary.

\begin{corollary}
Under the condition (\ref{eq23}), $\ell_1$-ZAP can recover the original signal at any precision if
the step-size $\gamma$ can be chosen small enough.
\end{corollary}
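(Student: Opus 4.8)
The plan is to obtain the corollary as a direct consequence of Theorem~1 and Theorem~4. Under condition~(\ref{eq23}) one has $\delta_{2S}<\sqrt{2}-1$, so Theorem~1 guarantees that the solution ${\bf x}^*$ of~(\ref{l1}) is unique and equals the original sparse signal. Thus ``recovering the original signal at any precision'' means exactly: for every $\varepsilon>0$ there is a choice of step-size $\gamma$ for which the $\ell_1$-ZAP iterates $\{{\bf x}_n\}$ eventually satisfy $\|{\bf x}_n-{\bf x}^*\|_2<\varepsilon$. Since ${\bf x}^*$ is the unique solution of~(\ref{l1}), both Lemma~1 and Theorem~4 are applicable, and the whole argument can be carried out in terms of ${\bf x}^*$.

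I would first fix the geometric data independently of $\gamma$. Set ${\bf x}_0={\bf A}^{\dagger}{\bf y}$ and $M_0:=\|{\bf x}_0-{\bf x}^*\|_2$ (assuming $M_0>0$; otherwise recovery is already exact), fix a tradeoff parameter $\mu>1$, and put $L:=\max_{{\bf x}\in\mathbb{R}^N}\|{\bf P}{\rm sgn}({\bf x})\|_2$, which is finite because ${\rm sgn}$ takes only the finitely many values in $\{-1,0,1\}^N$. With $M_0$ fixed, Lemma~1 supplies a constant $t=t({\bf A},{\bf y},M_0)>0$, and then $K$ and $d$ in~(\ref{eq25})--(\ref{eq26}) are fixed positive numbers; note that $M_0$ is chosen first, then $t$, then $K$ and $d$, so no circularity arises. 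The first substantive step is to show that the energy hypothesis of Theorem~4 propagates: if $\gamma\le M_0/(K+L)$ then $\|{\bf x}_n-{\bf x}^*\|_2\le M_0$ for every $n$. This follows by induction on $n$: when $\|{\bf x}_n-{\bf x}^*\|_2\ge K\gamma$, Theorem~4 gives $\|{\bf x}_{n+1}-{\bf x}^*\|_2^2\le\|{\bf x}_n-{\bf x}^*\|_2^2-d\gamma^2<M_0^2$; when $\|{\bf x}_n-{\bf x}^*\|_2<K\gamma$, the recursion~(\ref{eq5}) and the triangle inequality give $\|{\bf x}_{n+1}-{\bf x}^*\|_2\le\|{\bf x}_n-{\bf x}^*\|_2+\gamma\|{\bf P}{\rm sgn}({\bf x}_n)\|_2<(K+L)\gamma\le M_0$.

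With the energy bound secured, Theorem~4 applies at every iteration, and the same dichotomy shows that the ball $\{{\bf x}:\|{\bf x}-{\bf x}^*\|_2\le(K+L)\gamma\}$ is absorbing: once ${\bf x}_n$ is in it, so is ${\bf x}_{n+1}$ (the squared distance strictly decreases if $\|{\bf x}_n-{\bf x}^*\|_2\ge K\gamma$, and otherwise the increment is at most $L\gamma$). This ball is entered after finitely many steps, since as long as $\|{\bf x}_n-{\bf x}^*\|_2\ge K\gamma$ the nonnegative quantity $\|{\bf x}_n-{\bf x}^*\|_2^2$ drops by the fixed amount $d\gamma^2$ per iteration, so within about $M_0^2/(d\gamma^2)$ iterations some ${\bf x}_n$ satisfies $\|{\bf x}_n-{\bf x}^*\|_2<K\gamma$. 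Hence $\limsup_{n\to\infty}\|{\bf x}_n-{\bf x}^*\|_2\le(K+L)\gamma$. Finally, given $\varepsilon>0$, choose $\gamma<\min\{M_0/(K+L),\,\varepsilon/(K+L)\}$; then the propagation hypothesis holds and $(K+L)\gamma<\varepsilon$, so $\|{\bf x}_n-{\bf x}^*\|_2<\varepsilon$ for all large $n$. Because ${\bf x}^*$ is the original signal and $\varepsilon$ was arbitrary, this is exactly the claimed recovery at any precision.

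I expect the propagation of the energy bound to be the delicate point, for two reasons: Theorem~4 only gives the contraction while $\|{\bf x}_n-{\bf x}^*\|_2\le M_0$, and the constants $t,K,d$ themselves depend on $M_0$ through Lemma~1, so one must pin $M_0=\|{\bf x}_0-{\bf x}^*\|_2$ down first and only then shrink $\gamma$; and one must handle the single step that can leave the $(K\gamma)$-ball using only the coarse bound $\gamma L$ on the update, which forces the effective ``radius'' in the final estimate to be $(K+L)\gamma$ rather than $K\gamma$. The remaining ingredients --- the ``decreases by a fixed amount, hence reaches the target in finite time'' argument and the identification of ${\bf x}^*$ with the original signal via Theorem~1 --- are routine.
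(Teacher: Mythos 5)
Your proposal is correct and takes essentially the same route as the paper's proof: fix $M_0$ from the initial value, propagate the energy constraint $\|{\bf x}_n-{\bf x}^*\|_2\le M_0$ by induction so Theorem~4 applies at every step, use the fixed decrease $d\gamma^2$ to force entry into a neighborhood of radius proportional to $\gamma$ in finitely many iterations, and identify ${\bf x}^*$ with the original signal via Theorem~1. If anything, your handling of the one delicate point --- that an iterate inside the $(K\gamma)$-ball may leave it by at most $\gamma\max_{\bf x}\|{\bf P}{\rm sgn}({\bf x})\|_2$, so one works with the absorbing $(K+L)\gamma$-ball and requires $\gamma\le M_0/(K+L)$ --- is more explicit than the paper, which simply asserts that the bound $\|{\bf x}_{n+1}-{\bf x}^*\|_2<M_0$ follows ``naturally'' from (\ref{eq9}).
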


\begin{proof}
Firstly, it will be demonstrated that the condition of energy constraint in Theorem 4 can always be
satisfied. In fact, $M_0$ can be chosen greater than $\|{\bf x}_0-{\bf x}^*\|_2$. If the energy
constraint $\|{\bf x}_n-{\bf x}^*\|_2<M_0$ holds for index $n$, the conditions of Theorem 4 are
satisfied and then $\|{\bf x}_{n+1}-{\bf x}^*\|_2<M_0$ holds naturally according to (\ref{eq9}).
Consequently, it is readily accepted that the condition of energy constraint is satisfied for each
index $n$, with the utilization of Theorem 4 in each step.

Combining the explanation after Theorem 4, it is clear that the $\ell_1$-ZAP is convergent to the
solution of (\ref{l1}) at any precision as long as the step-size is chosen small enough.

According to Theorem 1, it is known that under the condition of (\ref{eq23}), the
solution of (\ref{l1}) is unique and identical to the original sparse signal. Then
Corollary 1 is proved.
\end{proof}

According to Theorem 4 and Corollary 1, the sequence will surely get into the
$(K\gamma)$-neighborhood of ${\bf x}^*$. In fact, because of several inequalities used in the
proof, $K\gamma$ is merely a theoretical radius with conservative estimation. The actual
convergence may get into a even smaller neighborhood. The details will be discussed in
Subsection F.

\subsection{Constant $t, K,$ and the Extremum of $\|{\bf P}{\rm sgn}({\bf x})\|_2$}

Involved in (\ref{eq25}) of Theorem 4, constant $t$ is essential to the convergence of
$\ell_1$-ZAP. In fact, the key contribution of this work is to indicate the existence of this
constant. However, one can merely obtain the existence of $t$ from the proof of Lemma 1, other than
its exact value. Because ${\bf x}^*$ in the definition of (\ref{eq6}) is unknown, it is difficult
to give the exact value or formula of $t$, even though it is actually determined by ${\bf A}$,
${\bf y}$, and $M_0$. Whereas, an upper bound is given with some information about $t$,
which leads to Theorem 5.

According to (\ref{eq25}), constant $K$ is inversely proportional to $t$. With a small $t$, the
radius of convergent neighborhood is large and the convergence precision is worse. The maximum of
$\|{\bf P}{\rm sgn}({\bf x})\|_2$ is also involved in the definition of $K$. According
to the range of sign function, i.e. $\{-1,0,1\}$, there are $3^N$ choices of vector ${\rm sgn}({\bf
x})$ altogether. Similar to $t$, the extremum of $\|{\bf P}{\rm sgn}({\bf x})\|_2$ is determined by
$\bf A$.

The relationship between $t$ and extremum of $\|{\bf P}{\rm sgn}({\bf x})\|_2$ is presented in
Theorem 5.

\begin{theorem}
If $t$ is defined by (\ref{eq6}), one has the following inequality
\begin{equation}\label{eq24}
t\le \min_{{\bf x}\in {\mathcal X}_1\cup{\mathcal X}_2}{\|{\bf P}{\rm sgn}({\bf
x})\|_2}\le \max_{{\bf x}\in\mathbb{R}^N}{\|{\bf P}{\rm sgn}({\bf x})\|_2}\le \sqrt{N}.
\end{equation}
\end{theorem}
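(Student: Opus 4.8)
The plan is to prove the chain~(\ref{eq24}) one link at a time. For the rightmost bound $\max_{{\bf x}\in\mathbb{R}^N}\|{\bf P}{\rm sgn}({\bf x})\|_2\le\sqrt N$, the observation is that ${\bf P}$ in~(\ref{eq4}) is the orthogonal projector onto the null space of ${\bf A}$: it is symmetric and idempotent, so $\|{\bf P}{\bf v}\|_2\le\|{\bf v}\|_2$ for every ${\bf v}\in\mathbb{R}^N$. Since each entry of ${\rm sgn}({\bf x})$ equals $-1$, $0$, or $1$, we have $\|{\rm sgn}({\bf x})\|_2^2\le N$, hence $\|{\bf P}{\rm sgn}({\bf x})\|_2\le\sqrt N$ for every ${\bf x}$ and therefore for the maximum. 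The middle inequality is immediate: ${\mathcal X}_1\cup{\mathcal X}_2$ is a nonempty subset of $\mathbb{R}^N$ --- nonempty because the solution space $\{{\bf x}:{\bf A}{\bf x}={\bf y}\}$ is an affine subspace of dimension $N-M\ge 1$ containing ${\bf x}^*$, so it meets the punctured ball $0<\|{\bf x}-{\bf x}^*\|_2\le M_0$ --- and a minimum over a subset never exceeds a maximum over the ambient space. Both extrema are attained since ${\bf P}{\rm sgn}({\bf x})$ takes only finitely many values.

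The substantive step is the leftmost inequality. I would fix an arbitrary ${\bf x}\in{\mathcal X}_1\cup{\mathcal X}_2$ and set ${\bf s}={\rm sgn}({\bf x})$. The key point is that, since ${\rm sgn}(0)=0\in[-1,1]$, the vector ${\bf s}$ is a subgradient of $\|\cdot\|_1$ at ${\bf x}$, so the subgradient inequality gives $\|{\bf x}^*\|_1\ge\|{\bf x}\|_1+{\bf s}^{\rm T}({\bf x}^*-{\bf x})$, i.e. $\|{\bf x}\|_1-\|{\bf x}^*\|_1\le{\bf s}^{\rm T}({\bf x}-{\bf x}^*)$. Because ${\bf A}{\bf x}={\bf y}={\bf A}{\bf x}^*$, the difference ${\bf x}-{\bf x}^*$ lies in the null space of ${\bf A}$, hence ${\bf P}({\bf x}-{\bf x}^*)={\bf x}-{\bf x}^*$; using ${\bf P}^{\rm T}={\bf P}$ this turns ${\bf s}^{\rm T}({\bf x}-{\bf x}^*)$ into $({\bf P}{\bf s})^{\rm T}({\bf x}-{\bf x}^*)$, and Cauchy--Schwarz bounds it by $\|{\bf P}{\bf s}\|_2\,\|{\bf x}-{\bf x}^*\|_2$. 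Dividing by $\|{\bf x}-{\bf x}^*\|_2>0$ yields ${\rm g}({\bf x})\le\|{\bf P}{\rm sgn}({\bf x})\|_2$ for the function ${\rm g}$ of~(\ref{eq1}). Since $t$ is by~(\ref{eq6}) the infimum of ${\rm g}$ over exactly the set ${\mathcal X}_1\cup{\mathcal X}_2$, we obtain $t\le{\rm g}({\bf x})\le\|{\bf P}{\rm sgn}({\bf x})\|_2$, and minimizing over ${\bf x}\in{\mathcal X}_1\cup{\mathcal X}_2$ closes the chain~(\ref{eq24}).

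The only real obstacle is recognizing that ${\rm sgn}({\bf x})$ --- the very vector that drives the recursion~(\ref{eq5}) --- is a genuine subgradient of the $\ell_1$ norm at ${\bf x}$, and then exploiting the identity ${\bf P}({\bf x}-{\bf x}^*)={\bf x}-{\bf x}^*$, valid because both points lie in the solution space; once these are in place the estimate is a single Cauchy--Schwarz line. As a consistency check, the same computation shows $\|{\bf P}{\rm sgn}({\bf x})\|_2\ge t>0$ for all ${\bf x}\in{\mathcal X}_1\cup{\mathcal X}_2$, so $\|{\bf P}{\rm sgn}(\cdot)\|_2$ --- though it can vanish elsewhere in $\mathbb{R}^N$ --- is bounded away from zero on the set that actually matters.
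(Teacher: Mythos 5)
Your proof is correct and follows essentially the same route as the paper: the left inequality via the bound $\|{\bf x}\|_1-\|{\bf x}^*\|_1\le {\rm sgn}({\bf x})^{\rm T}({\bf x}-{\bf x}^*)$ combined with ${\bf P}({\bf x}-{\bf x}^*)={\bf x}-{\bf x}^*$ and Cauchy--Schwarz, and the right inequality from ${\bf P}$ being an orthogonal projector together with $\|{\rm sgn}({\bf x})\|_2\le\sqrt N$. Your explicit use of the subgradient inequality is in fact a slightly tidier justification than the paper's appeal to its identity (\ref{eq35}), which was derived only on ${\mathcal X}_2$, but the substance of the argument is identical.
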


The proof of Theorem 5 is postponed to Appendix C.

According to the theorem, the minimum of $\|{\bf P}{\rm sgn}({\bf x})\|_2$ restricts the value of
$t$, as leads to worse precision of $\ell_1$-ZAP. Hence, the measurement matrix ${\bf A}$ should be
chosen with relatively large $\min\|{\bf P}{\rm sgn}({\bf x})\|_2$ to improve the
performance of the mentioned algorithm. The mathematical meaning of ${\bf P}{\rm
sgn}({\bf x})$ is the projection of ${\rm sgn}({\bf x})$ to the solution space of ${\bf y=Ax}$. For
a particular instance, if there exists a sign vector, to whom the solution space is almost
orthogonal, then the minimum of $\|{\bf P}{\rm sgn}({\bf x})\|_2$ is rather small and the precision
of convergence is bad. An additional explanation is that the solution space can not be strictly
orthogonal to any sign vector, or else it will lead to a contradiction with the condition of
(\ref{eq23}), i.e., the uniqueness of ${\bf x}^*$.

\subsection{Discussions on $\mu$ and Bound Sequence}

A parameter $\mu$ is involved in Theorem 4. We will discuss the choice of $\mu$ and some related
problems. First of all, it needs to be stressed that $\mu$ is just a parameter for the
bound sequence in theoretical analysis, other than a parameter for actual iterations.

According to the proof in Appendix B, as long as $\mu$ is chosen satisfying the conditions of
$\mu>1$ and
$$
\|{\bf x}_n-{\bf x}^*\|_2\ge \gamma\frac{\mu}{2t}\max_{{\bf x}\in\mathbb{R}^N}{\|{\bf P}{\rm sgn}({\bf x})\|_2^2},
$$
Theorem 4 holds and the distance between ${\bf x}_n$ and ${\bf x}^*$ decreases in the
next iteration. However, considering the expression of (\ref{eq26}), the decrease of $\|{\bf
x}_n-{\bf x}^*\|_2^2$ by each iteration is different for various $\mu$. There are two
strategies to choose the parameter $\mu$, a constant or a variable one.

When $\mu$ is chosen as a constant, Theorem 4 indicates that as long as the distance between ${\bf
x}_n$ and ${\bf x}^*$ is larger than $K\gamma$, the next iteration leads to a decrease at least a
constant step of $d\gamma^2$.

When the parameter $\mu$ is variable, the decrease step of $\|{\bf x}_n-{\bf x}^*\|_2$ is also
variable. The expressions show that $K$ and $d$ increase as the increase of $\mu$. Notice that
$\mu$ must obey
\begin{equation}\label{eq54}
1<\mu\le\frac{2t}{\gamma}\frac{\|{\bf x}_n-{\bf x}^*\|_2}{\displaystyle\max_{{\bf
x}\in\mathbb{R}^N}{\|{\bf P}{\rm sgn}({\bf x})\|_2^2}},
\end{equation}
where the right inequality is necessary to satisfy (\ref{eq99}), which ensures the convergence of
the sequence. During the very beginning of recursions, ${\bf x}_n$ is far from ${\bf x}^*$.
Consequently, $\mu$ satisfying (\ref{eq54}) can be larger, and lead to a faster convergence.
However, as ${\bf x}_n$ gets closer to ${\bf x}^*$ by iterations, $\mu$ satisfying (\ref{eq54}) is
definitely just a little larger than one.

To be emphasized, the actual convergence of iterations can not speed up by
choosing the parameter $\mu$. The value of $\mu$ only impacts the sequence of
\begin{equation}\label{eq55}
\|{\bf x}_{n+1}'-{\bf x}^*\|_2^2=\|{\bf x}_n'-{\bf x}^*\|_2^2-d\gamma^2,
\end{equation}
which is a sequence bounding the actual sequence in the proof of convergence.

\subsection{Convergence Rate}

Theorem 4 tells little about the convergence rate. Considering several inequalities utilized in the
proof, the actual convergence is faster than that of the sequence in (\ref{eq55}). It means that a
lower bound of the convergence rate can be derived in theory.

Corresponding to the variable selection of $\mu$, a sequence $\{{\bf x}_n'\}$ is put forward with
properties
\begin{equation}\label{eq46}
\|{\bf x}_{n+1}'-{\bf x}^*\|_2^2=\|{\bf x}_n'-{\bf x}^*\|_2^2-\gamma^2(\mu_n-1) \max_{{\bf
x}\in\mathbb{R}^N}{\|{\bf P}{\rm sgn}({\bf x})\|_2^2},
\end{equation}
where
\begin{equation}\label{eq45}
\mu_n=\frac{2t\|{\bf x}_n'-{\bf x}^*\|_2}{\gamma\displaystyle \max_{{\bf x}\in\mathbb{R}^N}{\|{\bf
P}{\rm sgn}({\bf x})\|_2^2}}>1.
\end{equation}
Combining (\ref{eq46}) and (\ref{eq45}), the iteration of ${\bf x}_n'$ obeys
\begin{align}\label{eq21}
\|{\bf x}_{n+1}'-{\bf x}^*\|_2^2 =&\|{\bf x}_n'-{\bf x}^*\|_2^2-2\gamma t\|{\bf x}_n'-{\bf x}^*\|_2+\gamma^2\max_{{\bf x}\in\mathbb{R}^N}{\|{\bf P}{\rm sgn}({\bf x})\|_2^2}.
\end{align}
The distance between ${\bf x}_n'$ and ${\bf x}^*$ with variable $\mu$ decreases the most for each
step. Therefore, $\{{\bf x}_n'\}$ has a faster convergence rate compared with sequences satisfying
(\ref{eq55}) with other choices of $\mu$. However, as a theoretical result, it
still converges more slowly than the actual sequence.

Derived from Lemma 2, which gives a rough estimation, Theorem 6 provides a much better lower bound
of the convergence rate.

\begin{lemma}
Supposing $\{{\bf x}_n\}$ is the iterative sequence by $\ell_1$-ZAP, it will take at most
$$
\frac{2(K_{\rm max}-K_{\rm min})}{2t-\frac{1}{K_{\rm min}}\displaystyle{\max_{{\bf x}\in\mathbb{R}^N}{\|{\bf P}{\rm
sgn}({\bf x})\|_2^2}}}\nonumber
$$
steps for $\{{\bf x}_n\}$ to get into the $(K_{\rm min}\gamma)$-neighborhood from the $(K_{\rm
max}\gamma)$-neighborhood of ${\bf x}^*$, where $K_{\rm max}>K_{\rm min}$ and $K_{\rm min}$ must
obey
\begin{equation}\label{eq2}
K_{\rm min}>\frac{1}{2t}\max_{{\bf x}\in\mathbb{R}^N}{\|{\bf P}{\rm sgn}({\bf x})\|_2^2}.
\end{equation}
\end{lemma}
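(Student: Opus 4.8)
The plan is to track the bounding sequence $\{{\bf x}_n'\}$ satisfying the worst-case recursion~(\ref{eq21}) and count how many steps it takes to shrink $\|{\bf x}_n'-{\bf x}^*\|_2$ from $K_{\rm max}\gamma$ down to $K_{\rm min}\gamma$. First I would write $a_n = \|{\bf x}_n'-{\bf x}^*\|_2$ and rewrite~(\ref{eq21}) as $a_{n+1}^2 = a_n^2 - 2\gamma t\, a_n + \gamma^2 Q$, where $Q = \max_{{\bf x}\in\mathbb{R}^N}\|{\bf P}{\rm sgn}({\bf x})\|_2^2$. The goal is a lower bound on the per-step decrease of $a_n$ itself (not $a_n^2$), valid throughout the range $K_{\rm min}\gamma \le a_n \le K_{\rm max}\gamma$. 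Using $a_{n+1}^2 = a_n^2 - 2\gamma t a_n + \gamma^2 Q$ and the factorization $a_n^2 - a_{n+1}^2 = (a_n - a_{n+1})(a_n + a_{n+1})$, I get
\begin{equation*}
a_n - a_{n+1} = \frac{2\gamma t\, a_n - \gamma^2 Q}{a_n + a_{n+1}} \ge \frac{2\gamma t\, a_n - \gamma^2 Q}{2a_n} = \gamma t - \frac{\gamma^2 Q}{2a_n},
\end{equation*}
where I used $a_{n+1}\le a_n$ (which holds precisely because $a_n \ge K_{\rm min}\gamma$ and the condition~(\ref{eq2}) forces $2\gamma t a_n - \gamma^2 Q > 0$). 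Since $a_n \ge K_{\rm min}\gamma$ in the regime of interest, this yields the uniform lower bound $a_n - a_{n+1} \ge \gamma t - \frac{\gamma Q}{2K_{\rm min}} = \frac{\gamma}{2}\bigl(2t - \frac{Q}{K_{\rm min}}\bigr)$, and condition~(\ref{eq2}) is exactly what makes this quantity strictly positive.

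Given a strictly positive per-step decrease $\delta := \frac{\gamma}{2}\bigl(2t - \frac{Q}{K_{\rm min}}\bigr)$ in the value $a_n$, the number of steps needed to cover the total distance $K_{\rm max}\gamma - K_{\rm min}\gamma$ is at most $\lceil (K_{\rm max}\gamma - K_{\rm min}\gamma)/\delta \rceil$. Substituting $\delta$ gives
\begin{equation*}
\frac{(K_{\rm max}-K_{\rm min})\gamma}{\frac{\gamma}{2}\bigl(2t - \frac{1}{K_{\rm min}}Q\bigr)} = \frac{2(K_{\rm max}-K_{\rm min})}{2t - \frac{1}{K_{\rm min}}\max_{{\bf x}\in\mathbb{R}^N}\|{\bf P}{\rm sgn}({\bf x})\|_2^2},
\end{equation*}
which is exactly the claimed bound. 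The last step is to transfer this back from the bounding sequence $\{{\bf x}_n'\}$ to the actual $\ell_1$-ZAP sequence $\{{\bf x}_n\}$: since the discussion in Subsection E/F establishes that the true sequence decreases its distance to ${\bf x}^*$ at least as fast as $\{{\bf x}_n'\}$ step for step (this is the content of~(\ref{eq46})--(\ref{eq21}) being the extremal choice of $\mu$), any step count that suffices for $\{{\bf x}_n'\}$ also suffices for $\{{\bf x}_n\}$.

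The main obstacle is being careful about two monotonicity facts that the informal argument glosses over: first, that $a_{n+1}\le a_n$ genuinely holds on the whole interval $[K_{\rm min}\gamma, K_{\rm max}\gamma]$ — this needs $2\gamma t a_n - \gamma^2 Q \ge 2\gamma t K_{\rm min}\gamma - \gamma^2 Q > 0$, i.e. precisely hypothesis~(\ref{eq2}); and second, that the per-step decrease bound $a_n - a_{n+1}\ge \delta$ does not degrade as $a_n$ itself shrinks (it is weakest at $a_n = K_{\rm min}\gamma$, which is why the bound is stated with $K_{\rm min}$ in the denominator). A minor additional subtlety is whether $\{{\bf x}_n'\}$ might overshoot below $K_{\rm min}\gamma$ in a single step; but since the decrease per step is $O(\gamma)$ while the target window has width $(K_{\rm max}-K_{\rm min})\gamma$, and in any case overshooting only helps, this does not affect the upper bound on the step count.
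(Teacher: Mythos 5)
Your proposal is correct and follows essentially the same route as the paper: both derive from the worst-case recursion~(\ref{eq21}) a uniform per-step decrease of the $\ell_2$ distance equal to $\gamma t\left(1-\tfrac{1}{\mu'}\right)=\tfrac{\gamma}{2}\left(2t-\tfrac{1}{K_{\rm min}}\max_{{\bf x}}\|{\bf P}{\rm sgn}({\bf x})\|_2^2\right)$ (the paper by comparing against a shifted square, you by factoring $a_n^2-a_{n+1}^2$), and then divide the travel distance $(K_{\rm max}-K_{\rm min})\gamma$ by this decrement, transferring to the actual sequence via the same domination argument. The minor algebraic difference does not change the substance of the proof.
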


\begin{theorem}
Supposing $\{{\bf x}_n\}$ is the iterative sequence by $\ell_1$-ZAP, it will get into the
$(K_0\gamma)$-neighborhood of ${\bf x}^*$ within at most
$$
\frac{M_0}{t\gamma}+\frac{K_0}{t}
\ln{\left(\frac{M_0}{K_0\gamma}\right)}+\frac{2K_0}{2t-\frac{1}{K_0}\displaystyle{\max_{{\bf
x}\in\mathbb{R}^N}{\|{\bf P}{\rm sgn}({\bf x})\|_2^2}}}
$$
steps. Here $M_0, \gamma$, and $t$ have the same definitions with those in Theorem 4, and $K_0$
must obey
$$
K_0>\frac{1}{2t}\max_{{\bf x}\in\mathbb{R}^N}{\|{\bf P}{\rm sgn}({\bf x})\|_2^2}.
$$
\end{theorem}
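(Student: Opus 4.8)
The plan is to bound the total number of iterations by splitting the trajectory into three phases according to how far $\mathbf{x}_n$ is from $\mathbf{x}^*$, and to estimate each phase separately using the per-step decrease guaranteed by Theorem~4 with the variable choice of $\mu$, i.e.\ using the majorizing sequence $\{\mathbf{x}_n'\}$ defined by~(\ref{eq21}). Write $e_n = \|\mathbf{x}_n' - \mathbf{x}^*\|_2$ and $Q = \max_{\mathbf{x}} \|\mathbf{P}\,\mathrm{sgn}(\mathbf{x})\|_2^2$, so that~(\ref{eq21}) reads $e_{n+1}^2 = e_n^2 - 2\gamma t\, e_n + \gamma^2 Q$, and since the actual sequence converges at least as fast, any iteration count for $\{\mathbf{x}_n'\}$ is also valid for $\{\mathbf{x}_n\}$. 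The three phases are: (i) from $e_n = M_0$ down to $e_n = K_0 \gamma$ using only the linear term (ignoring the $-2\gamma t e_n$ and noting the $+\gamma^2 Q$ term is dominated); (ii) a logarithmic-regime refinement; (iii) the final entry into the $(K_0\gamma)$-neighborhood, which is exactly the content of Lemma~2 with $K_{\max}$ large and $K_{\min} = K_0$.

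First I would establish a clean one-step estimate. From~(\ref{eq21}), as long as $e_n \ge K\gamma$ with $K > Q/(2t)$, one has $e_{n+1}^2 \le e_n^2 - 2\gamma t e_n + \gamma^2 Q \le e_n^2 - \gamma e_n\big(2t - Q/K\big)$, hence $e_{n+1} \le e_n - \tfrac{\gamma}{2}\big(2t - Q/K\big)$ (using $e_{n+1}^2 \le e_n^2 - c$ implies $e_{n+1} \le e_n - c/(2e_n)$ when $c \le e_n^2$). For the crudest phase, while $e_n$ is large — say $e_n \ge \sqrt{M_0}$ or simply while $e_n \ge K_0\gamma$ so $Q/K \le Q/K_0 < 2t$ — the decrement per step is at least $\gamma t$ in the worst normalization, giving roughly $M_0/(t\gamma)$ steps to bring $e_n$ from $M_0$ down to order-$\sqrt{\,}$ scale; this is the first summand. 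For the intermediate phase where $e_n$ ranges over $[K_0\gamma, \text{something}]$, the bound $e_{n+1} \le e_n(1 - \gamma t / e_n)$-type recursion behaves geometrically and summing the step counts across dyadic scales $e_n \sim 2^{-j}$ produces the $\tfrac{K_0}{t}\ln(M_0/(K_0\gamma))$ term — this is where one pays a logarithm. Finally, Lemma~2 applied with $K_{\min} = K_0$ and $K_{\max}$ equal to the radius at the end of phase (ii) gives the third summand $2K_0\big/\big(2t - Q/K_0\big)$, and one checks the three pieces add up to the claimed bound (possibly after mildly loosening constants).

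The bookkeeping obstacle, and the main technical point, is organizing the telescoping so that the three named terms emerge cleanly rather than a single messy sum. Concretely, I would partition the interval $(K_0\gamma, M_0]$ into the regime $e_n \ge 1$ (or $e_n \ge $ some fixed threshold independent of $\gamma$), handled by the constant-decrement estimate $e_{n+1} \le e_n - \gamma t$, which costs at most $\lceil (M_0 - 1)/(\gamma t)\rceil \le M_0/(t\gamma)$ steps; and the regime $K_0\gamma \le e_n < 1$, where I would compare $e_n$ with the continuous ODE $\dot e = -t$ (giving linear decay) but more carefully with the sharper recursion that yields geometric decay once the $\gamma^2 Q$ term is non-negligible — integrating $\mathrm{d}e / e$ over $[K_0\gamma, 1]$ yields $\ln(1/(K_0\gamma))$, which after absorbing the $\ln M_0$ into the first logarithm gives the stated $\tfrac{K_0}{t}\ln(M_0/(K_0\gamma))$. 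The hardest part will be making the transition between the "subtract a constant" estimate and the "multiply by a factor" estimate rigorous and confirming that the constants in front are exactly (or at least no worse than) those in the theorem statement; once Lemma~2 is invoked for the tail, the rest is careful summation. I expect the proof to be essentially this phase decomposition plus Lemma~2, with the inequalities chosen generously enough that the final three-term bound holds.
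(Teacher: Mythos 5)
There is a genuine gap in two places, and both concern exactly the parts you flagged as ``bookkeeping''. First, your phase-(i) claim that the per-step decrement is at least $\gamma t$ whenever $e_n\ge K_0\gamma$ is false. From $e_{n+1}^2=e_n^2-2\gamma t e_n+\gamma^2 Q$ the best one-step bound is $e_{n+1}\le e_n-\gamma t+\frac{\gamma^2 Q}{2e_n}$, so the decrement is $\gamma t-\frac{\gamma^2Q}{2e_n}$, which approaches $\gamma t$ only when $e_n\gg\gamma$ and degrades to $\gamma t\bigl(1-\frac{1}{\mu_0}\bigr)$ near the radius $K_0\gamma=\frac{\mu_0}{2t}\gamma Q$. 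A single constant-decrement estimate over the whole range $[K_0\gamma,M_0]$ therefore yields at best $\frac{M_0}{t\gamma}\cdot\frac{\mu_0}{\mu_0-1}$ steps, which blows up as $\mu_0\to1$ and does not give the theorem's leading term $\frac{M_0}{t\gamma}$ with the stated constant. Second, your mechanism for the logarithm is wrong: the recursion has additive, not multiplicative, decrements, so there is no geometric decay and no reason to integrate $\mathrm{d}e/e$ over dyadic scales; the thresholds ``$e_n\ge 1$'' or ``$e_n\ge\sqrt{M_0}$'' are also not scale-invariant quantities of the problem.

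The way the paper actually obtains the three terms is to apply Lemma~2 shell by shell: to pass from the $((n+1)K_0\gamma)$-neighborhood to the $(nK_0\gamma)$-neighborhood costs at most $\frac{K_0}{t}\bigl(1+\frac{1}{\mu_0 n-1}\bigr)$ steps, where $K_0=\frac{\mu_0}{2t}\max_{\bf x}\|{\bf P}{\rm sgn}({\bf x})\|_2^2$. Summing over $n=1,\dots,m$ with $mK_0\gamma<M_0\le(m+1)K_0\gamma$, the ``$1$'' parts give $\frac{mK_0}{t}\le\frac{M_0}{t\gamma}$, and the deficit parts form a harmonic-type sum $\sum_n\frac{1}{\mu_0 n-1}$ whose $\log$ growth produces $\frac{K_0}{t}\ln\bigl(\frac{M_0}{K_0\gamma}\bigr)$ plus the residual $\frac{2K_0}{2t-\frac{1}{K_0}\max_{\bf x}\|{\bf P}{\rm sgn}({\bf x})\|_2^2}$ coming from the $n=1$ shell. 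So the logarithm is the accumulated shortfall of the decrement below $\gamma t$ across shells, not a geometric-decay effect. Your outline could be repaired, but only by replacing phases (i) and (ii) with precisely this uniform shell decomposition and harmonic-sum estimate; as written, the middle phase and the leading constant do not go through.
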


The proofs of Lemma 2 and Theorem 6 are postponed to Appendix D and E, respectively.

\subsection{Choice of the Initial Value}

In $\ell_1$-ZAP, the initial value is the least square solution of the under-determined equation,
$$
{\bf x}_0={\bf A}^{\rm T}({\bf AA}^{\rm T})^{-1}{\bf y}.
$$
From Theorem 4 and Corollary 1, one knows that if the initial value obeys ${\bf Ax}_0={\bf y}$, the
iterative sequence $\{{\bf x}_n\}$ is convergent. Therefore, the restriction to the initial value
is to be in the solution space, other than to be the least square solution. However, it is still a
convenient way to initialize using the least square solution.

\section{Convergence in Noisy Scenario}

The convergence of $\ell_1$-ZAP in noisy scenario is analyzed in this section. The main theorem in
noisy scenario is given in Subsection A. In Subsection B, the problem of signal
recovery from inaccurate measurements is discussed. Subsection C shows different choices of initial
value and the impact on the quality of reconstruction. The reconstruction of $p$-compressible
signal by $\ell_1$-ZAP is discussed in Subsection D.

\subsection{Main Result in Noisy Scenario}

Considering the perturbation on measurement vector ${\bf y}$, Theorem 7 is presented to
analyze the convergence of $\ell_1$-ZAP. Similar to Lemma 1, Lemma 3 is proposed at first
corresponding to the noisy case.

\begin{lemma}
Suppose that ${\bf x}\in\mathbb{R}^N$ satisfies $\|{\bf y-Ax}\|_2\le\varepsilon$, with given ${\bf
A}\in\mathbb{R}^{M\times N}$ and ${\bf y}\in\mathbb{R}^M$. ${\bf x}^{\star}$ is the unique solution
of (\ref{l1_eps}). $\|{\bf x}-{\bf x}^{\star}\|_2$ is bounded by a positive number $M_0$. Then
there exists a positive number $t$ depending on ${\bf A}$, ${\bf y}$, $M_0$, and $\varepsilon$,
such that
\begin{equation}\label{eqinlemma4}
\|{\bf x}\|_1-\|{\bf x}^{\star}\|_1\ge t\|{\bf x}-{\bf x}^{\star}\|_2.
\end{equation}
\end{lemma}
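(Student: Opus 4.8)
The plan is to follow the template of the proof of Lemma~1 (Appendix~A), with the affine constraint ${\bf y}={\bf Ax}$ replaced by the feasible body ${\mathcal F}=\{{\bf x}\mid\|{\bf y}-{\bf Ax}\|_2\le\varepsilon\}$ and ${\bf x}^*$ replaced by ${\bf x}^{\star}$. Setting ${\rm g}({\bf x})=(\|{\bf x}\|_1-\|{\bf x}^{\star}\|_1)/\|{\bf x}-{\bf x}^{\star}\|_2$ as in (\ref{eq1}), inequality (\ref{eqinlemma4}) is equivalent to showing that ${\rm g}$ has a positive lower bound over ${\bf x}\in{\mathcal F}$ with $0<\|{\bf x}-{\bf x}^{\star}\|_2\le M_0$, and the uniqueness of ${\bf x}^{\star}$ for (\ref{l1_eps}) already gives ${\rm g}({\bf x})>0$ pointwise on this set. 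As in Appendix~A, fix $r_0>0$ small enough that ${\rm sgn}({\bf x})$ and ${\rm sgn}({\bf x}^{\star})$ agree on the support ${\mathcal I}=\{k\mid x^{\star}_k\ne0\}$ whenever $\|{\bf x}-{\bf x}^{\star}\|_2<r_0$, and split the domain into ${\mathcal X}_1={\mathcal F}\cap\{{\bf x}\mid r_0\le\|{\bf x}-{\bf x}^{\star}\|_2\le M_0\}$ and ${\mathcal X}_2={\mathcal F}\cap\{{\bf x}\mid 0<\|{\bf x}-{\bf x}^{\star}\|_2<r_0\}$. On ${\mathcal X}_1$ the argument is verbatim that of Lemma~1: ${\mathcal F}$ is closed and the shell is bounded, so ${\mathcal X}_1$ is compact, ${\rm g}$ is continuous there since the denominator is at least $r_0$, and hence ${\rm g}$ attains a minimum $t_1>0$.

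The real work is on ${\mathcal X}_2$. Writing ${\bf h}={\bf x}-{\bf x}^{\star}$ and using the sign agreement, $\|{\bf x}\|_1-\|{\bf x}^{\star}\|_1$ equals the positively homogeneous expression $\phi({\bf h})=({\rm sgn}({\bf x}^{\star}))_{\mathcal I}^{\rm T}{\bf h}_{\mathcal I}+\|{\bf h}_{{\mathcal I}^c}\|_1$, so ${\rm g}({\bf x})=\phi({\bf h})/\|{\bf h}\|_2$ depends only on the direction of ${\bf h}$. In Lemma~1 the admissible directions formed the fixed subspace $\ker{\bf A}$ and compactness of its unit sphere finished the proof; here ${\mathcal F}$ is not scale invariant about ${\bf x}^{\star}$, so one must argue on the tangent cone of ${\mathcal F}$ at ${\bf x}^{\star}$. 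If $\|{\bf y}\|_2\le\varepsilon$ then ${\bf x}^{\star}={\bf 0}$ and ${\rm g}({\bf x})\ge\|{\bf x}\|_1/\|{\bf x}\|_2\ge1$, so we may assume $\|{\bf y}\|_2>\varepsilon$; then the constraint is active, $\|{\bf y}-{\bf Ax}^{\star}\|_2=\varepsilon$, the vector ${\bf A}^{\rm T}({\bf y}-{\bf Ax}^{\star})$ is nonzero (because ${\bf AA}^{\rm T}$ is invertible), and the tangent cone is the half space $\{{\bf h}\mid({\bf y}-{\bf Ax}^{\star})^{\rm T}{\bf Ah}\ge0\}$.

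The idea on ${\mathcal X}_2$ is to control $\phi$ through the optimality of ${\bf x}^{\star}$. The KKT conditions for (\ref{l1_eps}) (Slater holds because ${\bf Ax}={\bf y}$ is solvable) furnish a constant $c>0$ such that ${\bf g}^{\star}=c\,{\bf A}^{\rm T}({\bf y}-{\bf Ax}^{\star})$ is a subgradient of $\|\cdot\|_1$ at ${\bf x}^{\star}$, whence $\phi({\bf h})\ge({\bf g}^{\star})^{\rm T}{\bf h}=c\,({\bf y}-{\bf Ax}^{\star})^{\rm T}{\bf Ah}$, while ${\bf x}\in{\mathcal F}$ unwinds to $({\bf y}-{\bf Ax}^{\star})^{\rm T}{\bf Ah}\ge\tfrac12\|{\bf Ah}\|_2^2$. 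Supplementing this with the noiseless-type estimate $\phi({\bf u})\ge t_0\|{\bf u}\|_2$ for ${\bf u}\in\ker{\bf A}$ (obtained, as in Appendix~A, from uniqueness of ${\bf x}^{\star}$ and compactness of the unit sphere in $\ker{\bf A}$), decomposing ${\bf h}={\bf u}+{\bf v}$ with ${\bf u}\in\ker{\bf A}$ and ${\bf v}$ orthogonal to $\ker{\bf A}$, and splitting into the cases where $\|{\bf v}\|_2$ is or is not small relative to $\|{\bf u}\|_2$ (using $\|{\bf Av}\|_2\ge\sigma_{\min}({\bf A})\|{\bf v}\|_2$), one extracts a uniform bound $\phi({\bf h})\ge t_2\|{\bf h}\|_2$ on ${\mathcal X}_2$. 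Taking $t=\min(t_1,t_2)$ then proves (\ref{eqinlemma4}).

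The main obstacle is making this last step genuinely uniform. Because ${\mathcal F}$ is curved rather than flat, the clean ``normalize the direction on a compact sphere'' device of Lemma~1 does not transfer verbatim: the argument has to play the first-order quantity $\phi$ (equivalently the linear form $({\bf y}-{\bf Ax}^{\star})^{\rm T}{\bf Ah}$) against the second-order feasibility slack $\|{\bf Ah}\|_2^2$, and one must prevent the lower bound from degenerating along directions tangent to the constraint sphere. It is precisely here that the uniqueness of ${\bf x}^{\star}$, together with linear independence of the columns of ${\bf A}$ indexed by ${\mathcal I}$ (which the recovery hypothesis guarantees), must be invoked with care. Once uniformity on ${\mathcal X}_2$ is in place, the remainder of the proof is bookkeeping that parallels Appendix~A.
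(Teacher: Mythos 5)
Your scaffolding coincides with the paper's up to a point: the reduction to $\inf{\rm g}>0$, the radius $r_0$ from sign agreement, the split into ${\mathcal X}_1$ and ${\mathcal X}_2$, and the compactness argument on ${\mathcal X}_1$ are exactly what the paper does. But the paper's proof of this lemma is only a sketch: it claims the ${\mathcal X}_2$ step goes through "following the proof of Lemma 1," with the sole remark that ${\bf u}=({\bf x}-{\bf x}^{\star})/\|{\bf x}-{\bf x}^{\star}\|_2$ is no longer in the null space of ${\bf A}$, and omits all details. Your ${\mathcal X}_2$ treatment (KKT multiplier $c\,{\bf A}^{\rm T}({\bf y}-{\bf Ax}^{\star})\in\partial\|{\bf x}^{\star}\|_1$, the half-space tangent cone, the feasibility slack $({\bf y}-{\bf Ax}^{\star})^{\rm T}{\bf Ah}\ge\tfrac12\|{\bf Ah}\|_2^2$, the ${\bf h}={\bf u}+{\bf v}$ splitting) is therefore a genuinely different route. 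Its individual ingredients are correct, but the decisive step is not carried out.

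The gap is precisely the one you flag yourself: the claimed uniform bound $\phi({\bf h})\ge t_2\|{\bf h}\|_2$ on ${\mathcal X}_2$ is asserted, not derived, and the proposed case split cannot produce it. In the regime $\|{\bf u}\|_2\ll\|{\bf v}\|_2$ the only information feasibility gives is $\phi({\bf h})\ge c({\bf y}-{\bf Ax}^{\star})^{\rm T}{\bf Ah}\ge\tfrac{c}{2}\|{\bf Ah}\|_2^2$, which is quadratic in $\|{\bf h}\|_2$, and this regime is genuinely populated: if ${\bf w}$ is supported on ${\mathcal I}$ with $({\rm sgn}({\bf x}^{\star}))_{\mathcal I}^{\rm T}{\bf w}_{\mathcal I}=0$ and ${\bf A}_{\mathcal I}{\bf w}_{\mathcal I}\neq{\bf 0}$, then the KKT certificate makes ${\bf w}$ automatically tangent, and a feasible curve ${\bf x}^{\star}+r{\bf w}+O(r^2)$ has $\ell_1$ excess only $O(r^2)$ while its distance to ${\bf x}^{\star}$ is of order $r$. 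Concretely, take ${\bf A}$ with rows $(1,1,0)$ and $(0,1,1)$, ${\bf y}=(2+\varepsilon,1)^{\rm T}$: one checks ${\bf x}^{\star}=(1,1,0)^{\rm T}$ is the unique solution of (\ref{l1_eps}), yet ${\bf x}_r=(1+r+r^2/\varepsilon,\;1-r,\;0)^{\rm T}$ is feasible for small $r$ and ${\rm g}({\bf x}_r)\approx r/(\sqrt2\,\varepsilon)\to0$. Hence no $t_2>0$ of the asserted kind exists, $\min(t_1,t_2)$ cannot close the argument, and any correct version must weaken the inequality near ${\bf x}^{\star}$ (for instance, require $\|{\bf x}-{\bf x}^{\star}\|_2$ to exceed a multiple of $\varepsilon$, which is all that the application in Theorem 7 actually needs). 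Note that the paper's own sketch skates over the same point: after normalization one only has $\|{\bf Au}\|_2\le2\varepsilon/\|{\bf x}-{\bf x}^{\star}\|_2$, which blows up as ${\bf x}\to{\bf x}^{\star}$, so the compact direction set that powered Lemma 1 is not recovered there either.
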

\begin{proof}
With the definition of (\ref{eq1}),
(\ref{eqinlemma4}) is equivalent to the following inequality
\begin{align}
&\inf_{\bf x}{\rm g}({\bf x})>0,\quad\text{subject to}\;\|{\bf y-Ax}\|_2\le\varepsilon\;\text{and}\;0<\|{\bf x}-{\bf
x}^{\star}\|_2\le M_0.
\end{align}

Following the proof of Lemma 1, it can be readily proved that Lemma 3 is correct. Notice that here
$$
{\bf u}=\frac{{\bf x}-{\bf x}^{\star}}{\|{\bf x}-{\bf x}^{\star}\|_2}
$$
is not in the null-space of $\bf A$, but a unit vector satisfying $\|{\bf Au}\|_2\le2\varepsilon$.
The remaining procedures are similar. The details of the proof are omitted for short.
\end{proof}

\begin{theorem}
Supposing that ${\bf x}^{\star}$ is the unique solution of (\ref{l1_eps}), sequence $\{{\bf x}_n\}$
satisfies the iterative formula (\ref{eq5}) with conditions
\begin{equation}\label{eq52}
\|{\bf y}-{\bf Ax}_n\|_2\le \varepsilon
\end{equation}
and
\begin{equation}\label{eq53}
\|{\bf x}_n-{\bf x}^{\star}\|_2\le M_0,
\end{equation}
where $M_0$ is a positive constant.
Then the iteration obeys
$$
\|{\bf x}_{n+1}-{\bf x}^{\star}\|_2^2\le \|{\bf x}_n-{\bf x}^{\star}\|_2^2-d \gamma^2,
$$
when
$$
\|{\bf x}_n-{\bf x}^{\star}\|_2\ge K\gamma+C\varepsilon,
$$
where $C=\frac{2}{t}\sqrt{N\lambda}$,
 $K$ and $d$ are defined by (\ref{eq25}) and (\ref{eq26}),
respectively. Here $\mu>1$ is a parameter, $t$ is the positive lower bound in Lemma 3, and
$\lambda$ is the largest eigenvalue of matrix $({\bf AA}^{\rm T})^{-1}$.
\end{theorem}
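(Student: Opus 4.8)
The plan is to follow the proof of Theorem~\ref{Theorem4} almost verbatim, isolating the single place where the noisy hypothesis forces an extra term. First I would substitute the recursion (\ref{eq5}) into $\|{\bf x}_{n+1}-{\bf x}^{\star}\|_2^2$ to obtain
\[
\|{\bf x}_{n+1}-{\bf x}^{\star}\|_2^2=\|{\bf x}_n-{\bf x}^{\star}\|_2^2-2\gamma\langle{\bf x}_n-{\bf x}^{\star},\,{\bf P}\,{\rm sgn}({\bf x}_n)\rangle+\gamma^2\|{\bf P}\,{\rm sgn}({\bf x}_n)\|_2^2,
\]
and bound the last term above by $\gamma^2\max_{\bf x}\|{\bf P}\,{\rm sgn}({\bf x})\|_2^2$. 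Since the matrix ${\bf P}$ in (\ref{eq4}) is symmetric, the cross term equals $\langle{\bf P}({\bf x}_n-{\bf x}^{\star}),\,{\rm sgn}({\bf x}_n)\rangle$, so the whole argument reduces to bounding this quantity from below.

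Here the one essential difference from the non-noisy case enters: ${\bf x}_n-{\bf x}^{\star}$ no longer lies in the null space of ${\bf A}$. Writing ${\bf e}_n={\bf A}({\bf x}_n-{\bf x}^{\star})$, condition (\ref{eq52}) applied to both ${\bf x}_n$ and ${\bf x}^{\star}$ gives $\|{\bf e}_n\|_2\le2\varepsilon$ --- exactly the observation already used in the proof of Lemma~3 --- and (\ref{eq4}) yields
\[
{\bf P}({\bf x}_n-{\bf x}^{\star})=({\bf x}_n-{\bf x}^{\star})-{\bf A}^{\rm T}({\bf AA}^{\rm T})^{-1}{\bf e}_n .
\]
I would split the cross term along this decomposition. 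The ``main'' piece $\langle{\bf x}_n-{\bf x}^{\star},\,{\rm sgn}({\bf x}_n)\rangle$ is treated exactly as in Theorem~\ref{Theorem4}: it equals $\|{\bf x}_n\|_1-\langle{\bf x}^{\star},\,{\rm sgn}({\bf x}_n)\rangle\ge\|{\bf x}_n\|_1-\|{\bf x}^{\star}\|_1$, which by Lemma~3 (whose hypotheses are precisely (\ref{eq52})--(\ref{eq53})) is at least $t\|{\bf x}_n-{\bf x}^{\star}\|_2$.

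The residual piece is the genuinely new ingredient and bounding it is the only real work. I would use $|\langle{\bf A}^{\rm T}({\bf AA}^{\rm T})^{-1}{\bf e}_n,\,{\rm sgn}({\bf x}_n)\rangle|\le\|{\bf A}^{\rm T}({\bf AA}^{\rm T})^{-1}{\bf e}_n\|_2\,\|{\rm sgn}({\bf x}_n)\|_2$, then $\|{\rm sgn}({\bf x}_n)\|_2\le\sqrt{N}$ and $\|{\bf A}^{\rm T}({\bf AA}^{\rm T})^{-1}{\bf e}_n\|_2^2={\bf e}_n^{\rm T}({\bf AA}^{\rm T})^{-1}{\bf e}_n\le\lambda\|{\bf e}_n\|_2^2\le4\lambda\varepsilon^2$, so this term is at most $2\varepsilon\sqrt{N\lambda}$. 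Collecting the bounds gives
\[
\|{\bf x}_{n+1}-{\bf x}^{\star}\|_2^2\le\|{\bf x}_n-{\bf x}^{\star}\|_2^2-2\gamma t\|{\bf x}_n-{\bf x}^{\star}\|_2+4\gamma\varepsilon\sqrt{N\lambda}+\gamma^2\max_{\bf x}\|{\bf P}\,{\rm sgn}({\bf x})\|_2^2 .
\]
Imposing the requirement that the right-hand side not exceed $\|{\bf x}_n-{\bf x}^{\star}\|_2^2-d\gamma^2$, with $d=(\mu-1)\max_{\bf x}\|{\bf P}\,{\rm sgn}({\bf x})\|_2^2$, and simplifying (cancel $\|{\bf x}_n-{\bf x}^{\star}\|_2^2$, divide by $2\gamma t$), one finds this holds exactly when $\|{\bf x}_n-{\bf x}^{\star}\|_2\ge\frac{\mu\gamma}{2t}\max_{\bf x}\|{\bf P}\,{\rm sgn}({\bf x})\|_2^2+\frac{2}{t}\sqrt{N\lambda}\,\varepsilon=K\gamma+C\varepsilon$, with $K$ as in (\ref{eq25}) and $C=\frac{2}{t}\sqrt{N\lambda}$ --- the stated condition.

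I expect the main obstacle to be bookkeeping rather than anything deep: one must invoke Lemma~3 in place of Lemma~1, check its hypotheses against (\ref{eq52})--(\ref{eq53}), and make sure the factor $2$ in $\|{\bf e}_n\|_2\le2\varepsilon$ and the factor $\sqrt{N}$ from $\|{\rm sgn}({\bf x}_n)\|_2$ propagate correctly into $C=\frac{2}{t}\sqrt{N\lambda}$. The only conceptual point is recognizing that the sole new phenomenon in the noisy setting is the non-null-space defect ${\bf A}^{\rm T}({\bf AA}^{\rm T})^{-1}{\bf e}_n$, whose norm is $O(\varepsilon)$; this is what produces the additive $C\varepsilon$ in the convergence radius while leaving $K$ and $d$ exactly as in Theorem~\ref{Theorem4}.
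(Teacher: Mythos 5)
Your proposal is correct and follows essentially the same route as the paper's Appendix F: the same splitting of the cross term via ${\bf e}_n={\bf A}({\bf x}_n-{\bf x}^{\star})$ with $\|{\bf e}_n\|_2\le 2\varepsilon$, the same use of Lemma 3 for the main piece, and the same bound $2\varepsilon\sqrt{N\lambda}$ on the defect term, leading to the identical threshold $K\gamma+C\varepsilon$. The only cosmetic difference is that you bound the defect term directly by Cauchy--Schwarz, whereas the paper reaches the same quantity through the eigenvalue/trace of the rank-one matrix ${\bf A}^{\rm T}({\bf AA}^{\rm T})^{-1}{\bf e}_n{\bf e}_n^{\rm T}({\bf AA}^{\rm T})^{-1}{\bf A}$.
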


The proof of Theorem 7 goes in Appendix F.

Theorem 7 indicates that under measurement perturbation with energy less than $\varepsilon$,
the iterative sequence $\{{\bf x}_n\}$ will get into the $(K\gamma+C\varepsilon)$-neighborhood of
${\bf x}^{\star}$. For the fixed original signal and measurement matrix, the precision of ${\bf
x}_n$ approaching ${\bf x}^{\star}$ depends on both the step-size and the noise energy bound. It
means that ${\bf x}_n$ can not get close to the solution ${\bf x}^{\star}$ at any precision by
choosing small step-size, because the noise energy also controls a deviation component,
$C\varepsilon$.

\subsection{Signal Recovery from Inaccurate Measurements}

Corollary 2 indicates the property of signal reconstruction with inaccurate
measurements.
\begin{corollary}
Suppose the original signal is ${\bf x}^{\sharp}\in\mathbb{R}^N$, and the conditions of
(\ref{eq47}) and (\ref{eq49}) are satisfied. There exist real numbers $K>0$, $C'>0$ such that
$\ell_1$-ZAP can be convergent to a $(K\gamma+C'\varepsilon)$-neighborhood of ${\bf x}^{\sharp}$,
i.e., $\ell_1$-ZAP can approach the original signal to some extent under inaccurate measurements.
\end{corollary}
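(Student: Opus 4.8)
The plan is to combine Theorem 7 with the geometric results of Theorems 2 and 3, using the triangle inequality to pass from a neighborhood of the minimizer $\mathbf{x}^{\star}$ of $(P_2)$ to a neighborhood of the true signal $\mathbf{x}^{\sharp}$. First I would invoke Theorem 3: under the coherence hypothesis (\ref{eq49}) the solution of (\ref{l1_eps}) is unique, so the symbol $\mathbf{x}^{\star}$ is well defined and Theorem 7 applies. Next I would invoke Theorem 2: under the RIP hypothesis (\ref{eq47}) we have the deterministic bound $\|\mathbf{x}^{\star}-\mathbf{x}^{\sharp}\|_2 \le C_S\varepsilon$ for a constant $C_S$ depending only on $S$ (equivalently on $\delta_{2S}$). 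At this point both hypotheses of Corollary 2 have been used exactly once, to furnish the two facts we need.

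Then I would apply Theorem 7 to the iterative sequence $\{\mathbf{x}_n\}$. Note the initialization $\mathbf{x}_0 = \mathbf{A}^{\dagger}\mathbf{y}$ satisfies $\mathbf{A}\mathbf{x}_0 = \mathbf{y}$, hence $\|\mathbf{y}-\mathbf{A}\mathbf{x}_0\|_2 = 0 \le \varepsilon$, and the recursion (\ref{eq5}) preserves $\mathbf{A}\mathbf{x}_n = \mathbf{y}$ by (\ref{eq57}), so condition (\ref{eq52}) holds for every $n$ (with room to spare). The energy constraint (\ref{eq53}) is handled exactly as in the proof of Corollary 1: choose $M_0 > \|\mathbf{x}_0 - \mathbf{x}^{\star}\|_2$, observe that (\ref{eq53}) is inductively propagated by the descent inequality of Theorem 7. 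Theorem 7 then gives that $\{\mathbf{x}_n\}$ enters the $(K\gamma + C\varepsilon)$-neighborhood of $\mathbf{x}^{\star}$, with $C = \frac{2}{t}\sqrt{N\lambda}$.

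Finally I would close with the triangle inequality: for $n$ large enough,
\begin{align}
\|\mathbf{x}_n - \mathbf{x}^{\sharp}\|_2 &\le \|\mathbf{x}_n - \mathbf{x}^{\star}\|_2 + \|\mathbf{x}^{\star} - \mathbf{x}^{\sharp}\|_2 \le K\gamma + C\varepsilon + C_S\varepsilon = K\gamma + C'\varepsilon,
\end{align}
so setting $C' = C + C_S = \frac{2}{t}\sqrt{N\lambda} + C_S$ yields the claim, with $K$ as in (\ref{eq25}). I expect the only real subtlety to be a bookkeeping one rather than a deep obstacle: one must be a little careful that the constant $t$ (and hence $C$) appearing via Theorem 7 is the lower bound from Lemma 3 associated with the pair $(\mathbf{x}^{\star}, M_0)$ — it depends on $\mathbf{A}, \mathbf{y}, M_0, \varepsilon$ but not on $\gamma$ — so that the decomposition $K\gamma + C'\varepsilon$ genuinely isolates the step-size-controlled term from the noise-controlled term. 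Everything else is a direct chaining of already-established results.
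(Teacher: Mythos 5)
Your proposal is correct and follows essentially the same route as the paper: establish uniqueness of ${\bf x}^{\star}$ via Theorem 3 under (\ref{eq49}), apply Theorem 7 (handling the energy constraint as in Corollary 1) to get into the $(K\gamma+C\varepsilon)$-neighborhood of ${\bf x}^{\star}$, then use Theorem 2 under (\ref{eq47}) and the triangle inequality to set $C'=C+C_S$. Your explicit verification of condition (\ref{eq52}) via the initialization is a detail the paper defers to its discussion of initial values, but it changes nothing in substance.
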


\begin{proof}
Referring to the proof of Corollary 1, it can be readily accepted that the condition (\ref{eq53})
is always satisfied for any index $n$. It is known from Theorem 3 that (\ref{l1_eps}) has a unique
solution under the condition (\ref{eq49}). Consequently, according to Theorem 7, the sequence
$\{{\bf x}_n\}$ finally gets into the neighborhood of ${\bf x}^{\star}$ with the radius $K
\gamma+C\varepsilon$.

Theorem 2 shows that under the condition of (\ref{eq47}), the solution of (\ref{l1_eps}) is not far
from the original signal ${\bf x}^{\sharp}$, with the inequality
\begin{equation}\label{eq61}
\|{\bf x}^{\star}-{\bf x}^{\sharp}\|_2\le C_S\varepsilon.
\end{equation}
Combining Theorem 7, (\ref{eq61}), and the triangle inequality, one sees that the sequence gets
into the neighborhood of ${\bf x}^{\sharp}$ with the radius $K \gamma+(C+C_S)\varepsilon$. Denote
$C'=C+C_S$ and the conclusion of Corollary 2 is drawn.
\end{proof}

\subsection{Initial Values}

Among the assumptions of Theorem 7, a condition of (\ref{eq52}) is assumed to be
satisfied. Considering the recursion (\ref{eq5}), one readily sees that
\begin{equation}\label{eq321}
\|{\bf y}-{\bf Ax}_n\|_2=\|{\bf y}-{\bf Ax}_0\|_2.
\end{equation}
Under the simple condition of
\begin{equation}\label{eq19}
\|{\bf y}-{\bf Ax}_0\|_2\le\varepsilon,
\end{equation}
where ${\bf x}_0$ is not necessarily the least square solution of ${\bf y=Ax}$, it will suffice to
get (\ref{eq52}), which satisfies the condition of Theorem 7.

If the initial value satisfies (\ref{eq19}), by defining ${\bf e}_n={\bf A}({\bf x}_n-{\bf
x}^{\star})$, one has
\begin{align}\label{eq34}
\|{\bf e}_n\|_2&=\|({\bf y}-{\bf Ax}_n)-({\bf y}-{\bf Ax}^{\star})\|_2\nonumber\\
&\le\|{\bf y}-{\bf Ax}_n\|_2+\|{\bf y}-{\bf Ax}^{\star}\|_2 \le2\varepsilon.
\end{align}
Inequality (\ref{eq34}) provides the upper bound of $\|{\bf e}_n\|_2$ and it is used to prove
Theorem 7.

If the iterations begin with the least square solution of the perturbed measurement ${\bf y}$, it
obeys ${\bf y}={\bf Ax}_0$ and according to (\ref{eq321}) one has
$$
\|{\bf y}-{\bf Ax}_n\|_2=0,
$$
which means that (\ref{eq34}) can be modified to
\begin{equation}
\|{\bf e}_n\|_2\le\varepsilon.
\end{equation}
Hence, the parameter $\varepsilon$ can be reduced to a half throughout the proof of Theorem 7.
Therefore, if the initial value is chosen as the least square solution, the neighborhood of
convergence will be smaller, i.e., a better estimation can be reached.

\subsection{Discussions on $p$-compressible Signal}

The original signal is not always absolutely sparse. The reconstruction of compressible signal is
discussed here. Signal ${\bf x}$ is $p$-compressible with magnitude $R$ if the components of ${\bf
x}$ decay as
$$
|x_{(i)}|\le R\cdot i^{-1/p},
$$
where $x_{(i)}$ is the $i$th largest absolute value among the components of
${\bf x}$, and $p$ is a number between $0$ and $1$. Supposing that ${\bf x}_S$ is a best $S$-sparse
approximation to ${\bf x}$, the following inequalities hold \cite{cosamp},
\begin{align}\label{eq101}
\|{\bf x-x}_S\|_1&\le C_p\cdot R\cdot S^{1-1/p},\\
\label{eq102}
\|{\bf x-x}_S\|_2&\le D_p\cdot R\cdot S^{1/2-1/p},
\end{align}
where $C_p=(\frac{1}{p}-1)^{-1}$ and $D_p=(\frac{2}{p}-1)^{-\frac{1}{2}}$.

For a $p$-compressible signal ${\bf x}$, one has
$$
{\bf y}={\bf Ax+e}={\bf Ax}_S+({\bf A}({\bf x-x}_S)+{\bf e}).
$$
By Proposition 3.5 in \cite{cosamp}, the norm of ${\bf A}({\bf x-x}_S)$ can be estimated as
\begin{equation}\label{eq103}
\|{\bf A}({\bf x-x}_S)\|_2\le \sqrt{1+\delta_S}\left(\|{\bf x-x}_S\|_2+\frac{1}{\sqrt{S}}\|{\bf x-x}_S\|_1\right).
\end{equation}
Combining (\ref{eq101}), (\ref{eq102}) and (\ref{eq103}), one has
$$
\|{\bf A}({\bf x-x}_S)+{\bf e}\|_2\le \sqrt{1+\delta_S}(D_p+C_p)\cdot R\cdot S^{1/2-1/p}+\varepsilon.
$$
According to Theorem 7 and Corollary 2, the reconstruction property of $p$-compressible signal by
$\ell_1$-ZAP can be deduced as follows.
\begin{corollary}
Supposing ${\bf x}\in\mathbb{R}^N$ is $p$-compressible signal and the conditions of (\ref{eq47})
and (\ref{eq49}) are satisfied, then the $\ell_1$-ZAP sequence can approach ${\bf x}$ with a
deviation
$$
K\gamma+C'\varepsilon+C'\sqrt{1+\delta_S}(D_p+C_p)\cdot R\cdot S^{1/2-1/p},
$$
where $K$ and $C'$ are the same with those in Corollary 2, and $\varepsilon$ is the
energy bound of observation noise.
\end{corollary}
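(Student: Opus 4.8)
The plan is to reduce Corollary~3 to the noisy recovery result already established (Corollary~2, which itself rests on Theorem~7) by treating the best $S$-term approximation ${\bf x}_S$ as the ``clean'' sparse signal and folding the compressible tail ${\bf A}({\bf x}-{\bf x}_S)$ into the measurement noise. Concretely, write ${\bf y}={\bf Ax}_S+{\bf e}'$ with ${\bf e}':={\bf A}({\bf x}-{\bf x}_S)+{\bf e}$. The estimate assembled just before the corollary from (\ref{eq101})--(\ref{eq103}) gives $\|{\bf e}'\|_2\le\varepsilon'$, where
$$
\varepsilon':=\sqrt{1+\delta_S}(D_p+C_p)\cdot R\cdot S^{1/2-1/p}+\varepsilon .
$$
Thus, from the algorithm's viewpoint, running $\ell_1$-ZAP on the pair $({\bf A},{\bf y})$ is \emph{exactly} running it on an $S$-sparse signal ${\bf x}_S$ observed through ${\bf A}$ under additive noise of energy at most $\varepsilon'$.

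The next step is to verify that the hypotheses needed for Corollary~2 carry over to this reinterpreted problem. The RIP and coherence conditions (\ref{eq47}) and (\ref{eq49}) constrain only ${\bf A}$ and $S$, hence are unchanged; by Theorem~3 the program (\ref{l1_eps}) with tolerance $\varepsilon'$ still has a unique solution, call it ${\bf x}^{\star}$, and Lemma~3 supplies the corresponding lower bound $t>0$ (now depending on $\varepsilon'$ in place of $\varepsilon$). The two running conditions of Theorem~7 also hold: condition (\ref{eq52}) is satisfied because ${\bf Ax}_n={\bf Ax}_0$ for all $n$ (cf.\ (\ref{eq321})), so $\|{\bf y}-{\bf Ax}_n\|_2=\|{\bf y}-{\bf Ax}_0\|_2\le\varepsilon'$ as soon as the initial value lies in the relaxed feasible set (the least-squares initialization even makes this zero); and the energy constraint (\ref{eq53}) propagates inductively exactly as in the proof of Corollary~1, since once $\|{\bf x}_n-{\bf x}^{\star}\|_2$ exceeds the threshold a step can only decrease it. Applying Corollary~2 with ``${\bf x}^{\sharp}$'' $={\bf x}_S$ and ``$\varepsilon$'' $=\varepsilon'$ then shows that $\{{\bf x}_n\}$ enters the $(K\gamma+C'\varepsilon')$-neighborhood of ${\bf x}_S$, with $C'=C+C_S$ as in Corollary~2.

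Finally I would expand
$$
C'\varepsilon'=C'\varepsilon+C'\sqrt{1+\delta_S}(D_p+C_p)\cdot R\cdot S^{1/2-1/p},
$$
which is precisely the deviation claimed; and if one insists on a neighborhood of ${\bf x}$ itself rather than of ${\bf x}_S$, one adds $\|{\bf x}-{\bf x}_S\|_2\le D_p\cdot R\cdot S^{1/2-1/p}$ by the triangle inequality and absorbs this lower-order term into the $R\,S^{1/2-1/p}$ coefficient (a harmless constant factor, or equivalently a slight redefinition of $C'$). I do not anticipate a genuine obstacle here: the whole argument is a substitution of the ``effective noise level'' $\varepsilon'$ into theorems already proved. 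The only point deserving care is the bookkeeping in the middle step --- checking that the constants $K$, $d$, and $t$ appearing in Theorem~7 remain well defined when the role of the noise bound is played by $\varepsilon'$ (they depend on ${\bf A}$, ${\bf y}$, $M_0$, and the tolerance), and that $\varepsilon'$ is finite, which is guaranteed by $0<p<1$ so that $S^{1/2-1/p}$ is meaningful.
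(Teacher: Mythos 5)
Your proposal is correct and follows essentially the same route as the paper: the paper's own derivation (given in the discussion preceding the corollary) likewise decomposes ${\bf y}={\bf Ax}_S+({\bf A}({\bf x}-{\bf x}_S)+{\bf e})$, bounds the tail term via (\ref{eq101})--(\ref{eq103}) to form the effective noise level $\varepsilon'$, and then invokes Theorem~7 and Corollary~2 with ${\bf x}_S$ playing the role of the sparse signal. Your closing remark about adding $\|{\bf x}-{\bf x}_S\|_2\le D_p\cdot R\cdot S^{1/2-1/p}$ via the triangle inequality to pass from a neighborhood of ${\bf x}_S$ to one of ${\bf x}$ is a point the paper glosses over, and handling it explicitly only strengthens the argument.
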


The non-noisy scenario for compressible signal can be naturally obtained by setting $\varepsilon$
to zero in Corollary 3.

\section{Experiments}

Several experiments are conducted in this section. The performance of $\ell_0$-ZAP and $\ell_1$-ZAP
are shown in Subsection A, compared with several other algorithms for sparse recovery. The
deviations of actual $\ell_1$-ZAP sequence and bound sequences in the proof are illustrated in
Subsection B. In Subsection C, experiment results demonstrate the impacts of the
step-size and the noise level on the signal reconstruction via $\ell_1$-ZAP.

\subsection{Performance of ZAP}

The performances of $\ell_1$-ZAP and $\ell_0$-ZAP are simulated, compared with other sparse
recovery algorithms.

In the experiments, the $M\times N$ matrix ${\bf A}$ is generated with the entries independent and
following a normal distribution with mean zero and variance $1/M$. The support set of original
signal ${\bf x}^*$ is chosen randomly following uniform distribution. The nonzero entries follow a
normal distribution with mean zero. Finally the energy of the original signal is normalized.

For parameters $N=1000$, $S=50$, the probability of exact reconstruction for various number of
measurements is shown as Fig.~\ref{fig1}.
If the reconstruction SNR is higher than a threshold of $40$dB, the trial is regarded as exact reconstruction.
The number of $M$ varies from $140$ to $320$ and each
point in the experiment is repeated 200 times. The step-size of $\ell_1$-ZAP is $5\times10^{-4}$.
The parameters of other algorithms are selected as recommended by respective authors. It
can be seen that for any fixed $M$ from $180$ to $260$, $\ell_0$-ZAP and $\ell_1$-ZAP have higher
probability of reconstruction than other algorithms, which means ZAP algorithms demand fewer
measurements in signal reconstructions. The experiment also indicates that the performance of
$\ell_0$-ZAP is better than $\ell_1$-ZAP, as discussed in Section II.

\begin{figure}
\centering
\includegraphics[width=4in]{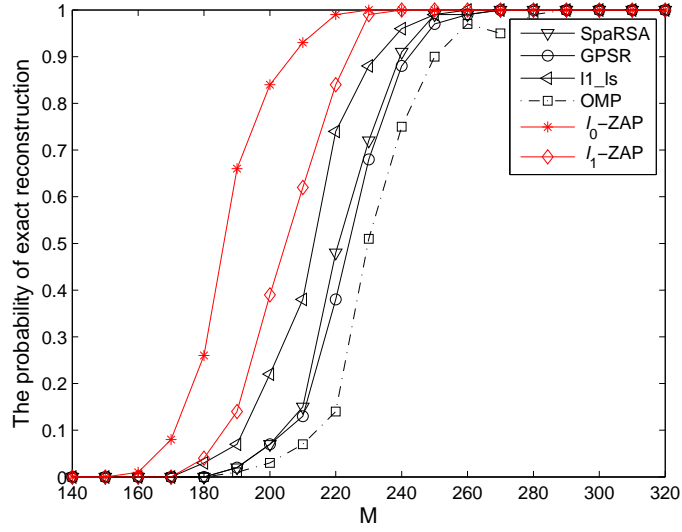}
\caption{Probability of exact reconstruction for various number of measurements, where
$N=1000, S=50$.}\label{fig1}
\end{figure}

For parameters $N=1000$, $M=200$, Fig.~\ref{fig2} illustrates the probability of exact
reconstruction for various sparsity $S$ from $25$ to $70$. All the algorithms are repeated 200
times for each value. The parameters of algorithms are the same as those in the previous
experiment. $\ell_0$-ZAP has the highest probability for fixed sparsity $S$ and $\ell_1$-ZAP is the
second beyond other conventional algorithms. The experiment indicates that ZAP algorithms can
recover less sparse signals compared with other algorithms.

\begin{figure}
\centering
\includegraphics[width=4in]{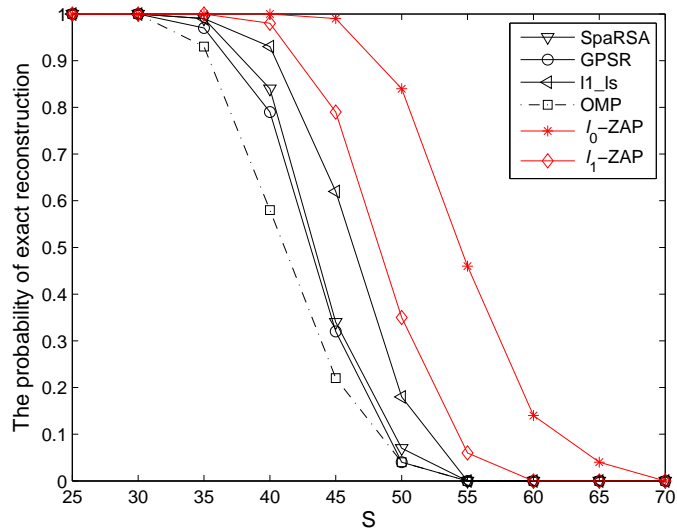}
\caption{Probability of exact reconstruction for various sparsity, where $N=1000, M=200$.}\label{fig2}
\end{figure}

The SNR performance is illustrated in Fig.~\ref{fig3} with the measurement SNR varying from $5$dB to $30$dB and 200
times repeated for each value. The noise is zero-mean white Gaussian and added to the observed vector ${\bf y}$.
The parameters are selected as $N=1000$, $M=200$ and $S=30$. The
parameters of algorithms have the same choice with previous experiments.
The reconstruction SNR and measurement SNR are the signal-to-noise ratios of reconstructed signal $\hat{\bf x}$ and measurement signal ${\bf y}$, respectively.
$\ell_0$-ZAP outperforms other algorithms, while $\ell_1$-ZAP is almost the same as others. The
experiment indicates that $\ell_0$-ZAP has a better performance against noise and $\ell_1$-ZAP does
not have visible defects compared with other algorithms.

\begin{figure}
\centering
\includegraphics[width=4in]{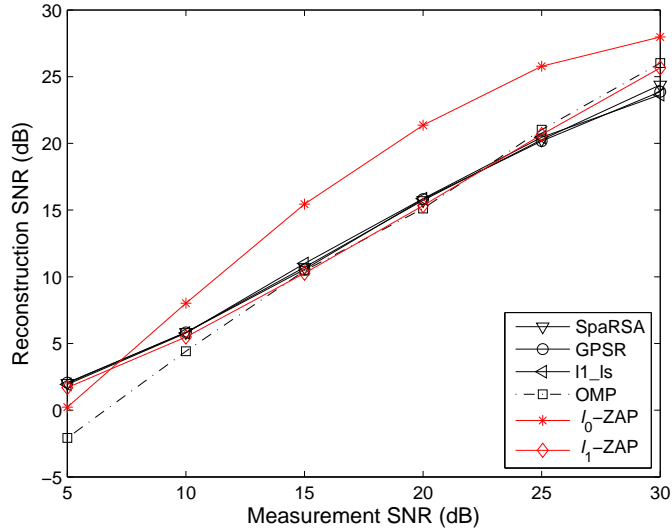}
\caption{Reconstruction SNR versus measurement SNR, where $N=1000, M=200, S=30$.}\label{fig3}
\end{figure}

The experiments above demonstrate that $\ell_1$-ZAP has a better performance compared with
conventional algorithms. $\ell_1$-ZAP demands fewer measurements and can recover signals with
higher sparsity, with similar property against noise. The performance of $\ell_0$-ZAP is better
than $\ell_1$-ZAP.

\subsection{Actual Sequence and Bound Sequences}

According to Theorem 4, the deviation from the actual iterative sequence to the sparse solution is
bounded by the sequence satisfying (\ref{eq55}). In Theorem 4, a sequence with parameter $\mu$ is
utilized to bound the actual sequence and proved to be convergent. As discussed in III-E and F, the
sequence defined in (\ref{eq46}) and (\ref{eq45}) with adaptive $\mu$ approaches the sparse
solution faster than any sequence with constant $\mu$.

The reconstruction SNR curves of the actual sequence and three bound sequences with different choices of
$\mu$ are demonstrated in Fig.~\ref{fig4}.
As can be seen in the figure, the bound sequence with
adaptive $\mu$ is the best estimation among different choices. For a constant $\mu$, the larger one
leads to faster convergence and less precision.

\begin{figure}
\centering
\includegraphics[width=4in]{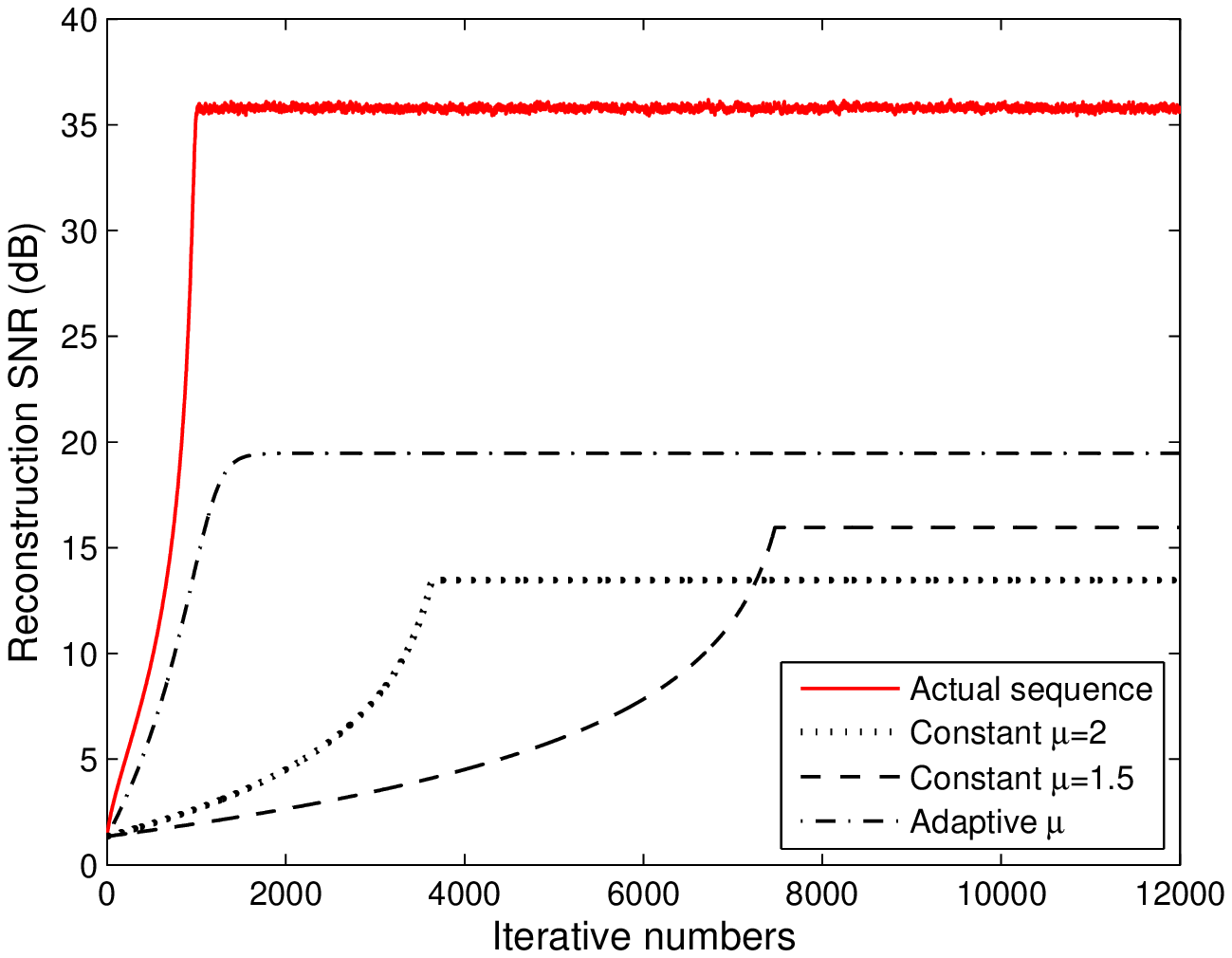}
\caption{Reconstruction SNR of actual sequence and bound sequences with different choices
of $\mu$, where $M=250$, $N=1000$, $S=50$, $\gamma=5\times 10^{-4}$.}\label{fig4}
\end{figure}

\begin{figure}
\centering
\includegraphics[width=4in]{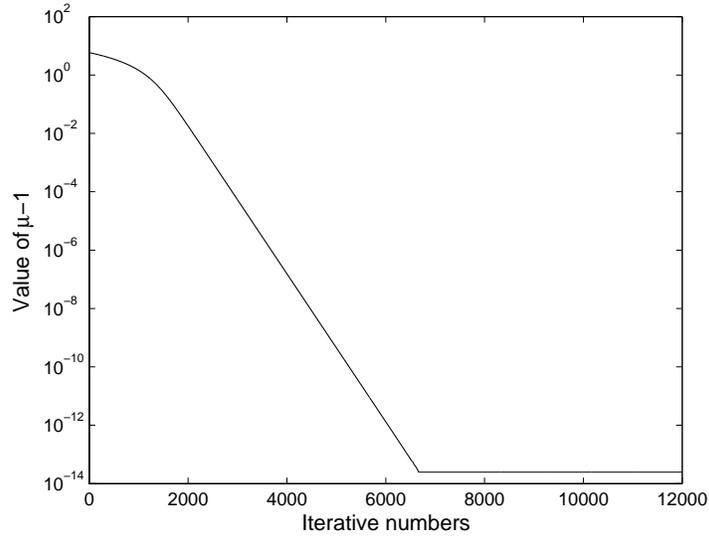}
\caption{The value of $\mu-1$ throughout the iteration for adaptive $\mu$.}\label{fig5}
\end{figure}

For adaptive $\mu$, as illustrated in Fig.~\ref{fig4}, the reconstruction SNR reaches steady-state
after about $2000$ iterations. However, referring to Fig.~\ref{fig5}, the value of $\mu$
keeps decreasing until over $6000$ iterations, though it impacts little to the convergence
behavior. In fact, adaptive $\mu$ will decrease towards $1$ throughout the iteration and never
stop. Nevertheless, the precision of simulation platform limits its variation after it
is below $3\times10^{-14}$.

The deviations of the actual iterative sequence and a bound sequence are both proportional to the
step-size, with the difference in the scale factor. Though the bound is not very
strict, it does well in the proof of the convergence of $\ell_1$-ZAP.

\subsection{About Step-size and Noise}

As proved in Theorem 4, in non-noisy scenario, $\ell_1$-ZAP can reconstruct the original signal at
arbitrary precision by choosing the step-size small enough. Theorem 7 demonstrates that in noisy
scenario the reconstruction SNR is determined by both the step-size and noise level.
Experiment results shown in Fig.~\ref{fig6} verify the analysis.
Each combination of step-size and measurement SNR is simulated 100 times. Experiment results indicate that in non-noisy scenario, the reconstruction SNR increases
 as the decreasing of step-size. In noisy scenario, the reconstruction SNR can not
increase arbitrarily due to the impact of noise. For small step-size, the reconstruction SNR is
mainly determined by noise level. The reconstruction SNR is higher when the measurement SNR is higher. For large step-size, the step-size mainly controls the reconstruction SNR and the reconstruction SNR increase as the decreasing of step-size.

\begin{figure}
\centering
\includegraphics[width=4in]{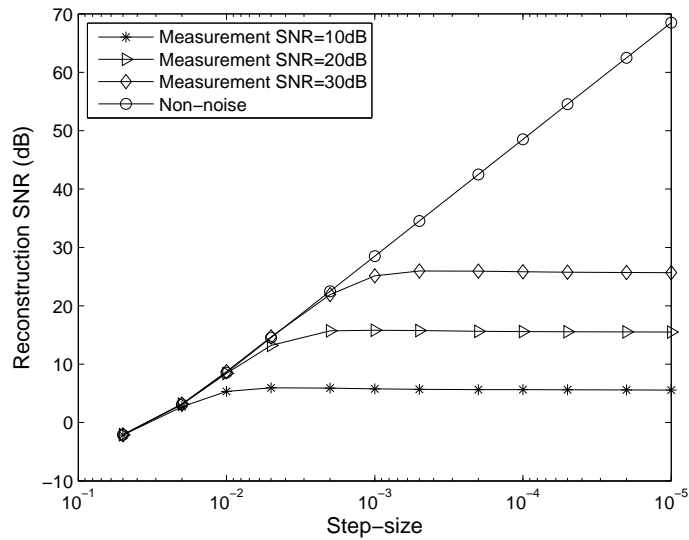}
\caption{Reconstruction SNR versus step-size for various SNR, where $M=150$, $N=1000$, $S=20$.}\label{fig6}
\end{figure}

The figure also offers a way to choose the step-size under noise. It is not necessary to choose the
step-size too small because it benefits little under the impact of noise. For an
estimated reconstruction SNR, the best choice of step-size is the value just entered the flat region.

\section{Conclusion}

This paper provides $\ell_1$-ZAP a comprehensive theoretical analysis. Firstly, the
mentioned algorithm is proved to be convergent to a neighborhood of the sparse solution with the
radius proportional to the step-size of iteration. Therefore, it is non-biased and can approach the
sparse solution to any extent and reconstruct the original signal exactly. Secondly, when the
measurements are inaccurate with noise perturbation, $\ell_1$-ZAP can also approach the sparse
solution and the precision is linearly reduced by the disturbance power. In addition, some related
topics about the initial value and the convergence rate are also discussed. The convergence
property of $p$-compressible signal by $\ell_1$-ZAP is also discussed. Finally, experiments are
conducted to verify the theoretical analysis on the convergence process and illustrate the
impacts of parameters on the reconstruction results.

\appendix
\appendixtitleon
\begin{appendices}
\section{The Proof of Lemma 1}
\label{ProofLemma2}

\begin{proof}
It is to be proved that ${\rm g}({\bf x})$ defined in (\ref{eq1}) has a positive lower bound
respectively for ${\bf x}\in {\mathcal X}_1$ and ${\bf x}\in {\mathcal X}_2$.

For ${\bf x}\in {\mathcal X}_1$, the function ${\rm g}({\bf x})$ is continuous for ${\bf x}$ and
the domain is a bounded closed set. As a basic theorem in calculus, the value of a continuous
function can reach the infinum if the domain is a bounded closed set. As a consequence, there
exists an ${\bf x}_0\in {\mathcal X}_1$, such that ${\rm g}({\bf x}_0)=\inf_{{\mathcal X}_1}{{\rm
g}({\bf x})}$. By the uniqueness of ${\bf x}^*$ and the definition of ${\rm g}({\bf x})$, ${\rm
g}({\bf x})$ is positive in ${\mathcal X}_1$. Then ${\rm g}({\bf x}_0)$ is positive and
this leads to the conclusion that the infimum of ${\rm g}({\bf x})$ is positive in
${\mathcal X}_1$.

On the other hand, it will be proved that ${\rm g}({\bf x})$ has a positive lower bound for ${\bf
x}\in {\mathcal X}_2$.

Any vector in the solution space of ${\bf y=Ax^*}$ can be represented by
\begin{equation}\label{eq3}
{\bf x}={\bf x}^*+r\cdot {\bf u}
\end{equation}
where
\begin{align}
r &= \|{\bf x}-{\bf x}^*\|_2,\\
{\bf u} &=\frac{{\bf x}-{\bf x}^*}{\|{\bf x}-{\bf x}^*\|_2}\label{defineu}
\end{align}
denote the distance and direction, respectively. Considering the definition of $r_0$, one has
\begin{equation}\label{eq7}
({\bf x}^*)^{\rm T}{\rm sgn}({\bf x})=({\bf x}^*)^{\rm T}{\rm sgn}({\bf x}^*)=\|{\bf x}^*\|_1.
\end{equation}
Combining (\ref{eq3}) with (\ref{eq7}), one gets
\begin{equation}
\|{\bf x}\|_1=({\bf x}^*+r\cdot {\bf u})^{\rm T}{\rm sgn}({\bf x})=\|{\bf x}^*\|_1+r\cdot {\bf
u}^{\rm T}{\rm sgn}({\bf x}).
\end{equation}
As a consequence, for $0<r<r_0$, the objective function can be simplified as
\begin{equation}\label{eq35}
{\rm g}({\bf x})=\frac{\|{\bf x}\|_1-\|{\bf x}^*\|_1}{\|{\bf x}-{\bf x}^*\|_2}={\bf u}^{\rm T}{\rm
sgn}({\bf x}) =\sum^N_{k=1}{u_k{\rm sgn}(x_k)}.
\end{equation}

Index set $\mathcal I=\{k~|~x^*_k\neq 0, 1\le k\le N\}$ is the support set of $\bf x^*$. ${\mathcal
I}^{\rm c}$ denotes the complement of ${\mathcal I}$. For $\forall k \in {\mathcal I}$, considering
the definition of $r_0$,
$$
u_k{\rm sgn}(x_k)=u_k{\rm sgn}(x_k^*).
$$
For $\forall k \in {\mathcal I}^{\rm c}$, considering $x_k^*=0$ and the definition of $\bf u$ in
(\ref{defineu}),
$$
u_k{\rm sgn}(x_k)=u_k{\rm sgn}(u_k)=|u_k|.
$$
Consequently, ${\rm g}({\bf x})$ can be rewritten as a function of ${\bf u}$,
\begin{align}
{\rm g}({\bf x})&=\sum_{k\in{\mathcal I}}{u_k{\rm sgn}(x_k)}+\sum_{k\in{\mathcal I}^{\rm c}}{u_k{\rm sgn}(x_k)}\nonumber\\
&={\bf u}_{\mathcal I}^{\rm T}{\rm sgn}({\bf x}^*)+\|{\bf u}_{{\mathcal I}^{\rm c}}\|_1
\triangleq {\rm G}({\bf u}),
\end{align}
where
$$
( \bf u_{\mathcal I})_k=\left\{
    \begin{array}{cl} u_k, & k\in{\mathcal I}; \\
    0, & {\rm elsewhere}, \end{array} \right.
$$
and
$$
{\bf u}_{{\mathcal I}^{\rm c}}={\bf u}-{\bf u}_{\mathcal I}.
$$

It can be seen that ${\rm G}({\bf u})$ is continuous for ${\bf u}$ and the domain of ${\rm G}({\bf
u})$ is $\{{\bf u}\in\mathbb{R}^N~|~\|{\bf u}\|_2=1\}\cap\{{\bf u}\in\mathbb{R}^N~|~{\bf Au}=0\}$.
Since the domain of ${\rm G}({\bf u})$ is the intersection of two closed sets and the first set is
bounded, it is a bounded closed set and ${\rm G}({\bf u})$ can reach the infimum. Then
${\rm g}({\bf x})$ has the minimum. By the uniqueness of ${\bf x^*}$, ${\rm g}({\bf x})$ is
positive, consequently $\inf_{{\mathcal X}_2}{{\rm g}({\bf x})}>0$.

To sum up, the lower bound of ${\rm g}({\bf x})$ is positive for
\begin{align}
{\bf x}&\in {\mathcal X}_1\cup{\mathcal X}_2\nonumber\\
&= \{{\bf x}~|~0<\|{\bf x}-{\bf x}^*\|_2\le
M_0\}\cap\{{\bf x}~|~{\bf y=Ax}\},\nonumber
\end{align}
which completes the proof of Lemma~1.
\end{proof}

\section{The Proof of Theorem 4}
\label{ProofTheorem4}

\begin{proof}
By denoting
$${\bf h}_n = {\bf x}_n-{\bf x}^*$$
as the iterative deviation and subtracting the unique solution ${\bf x}^*$ from both sides
of (\ref{eq5}), one has
\begin{align}\label{eq10}
\|{\bf h}_{n+1}\|_2^2=\|{\bf h}_n-\gamma{\bf P}{\rm sgn}({\bf x}_n)\|_2^2=\|{\bf h}_n\|_2^2-2\gamma{\bf h}_n^{\rm T}{\bf P}{\rm sgn}({\bf x}_n)+\gamma^2\|{\bf P}{\rm sgn}({\bf x}_n)\|_2^2.
\end{align}

According to (\ref{eq57}),
$$
{\bf h}_n^{\rm T}{\bf A}^{\rm T}=({\bf x}_n-{\bf x}^*)^{\rm T}{\bf A}^{\rm T}=0.
$$
Considering
$$
{\bf h}_n^{\rm T}{\bf P}={\bf h}_n^{\rm T}-{\bf h}_n^{\rm T}{\bf A}^{\rm T}({\bf AA}^{\rm
T})^{-1}{\bf A}={\bf h}_n^{\rm T},
$$
$$
({\bf x}^*)^{\rm T}{\rm sgn}({\bf x}_n)\le ({\bf x}^*)^{\rm T}{\rm sgn}({\bf x}^*)=\|{\bf x}^*\|_1
$$
and using Lemma 1, one can shrink the second item of (\ref{eq10}) to
\begin{equation}\label{eq12}
{\bf h}_n^{\rm T}{\rm sgn}({\bf x}_n)\ge \|{\bf x}_n\|_1-\|{\bf x}^*\|_1 \ge t\|{\bf h}_n\|_2.
\end{equation}

Using (\ref{eq12}) and (\ref{eq10}), one has
$$
\|{\bf h}_{n+1}\|_2^2\le\|{\bf h}_n\|_2^2 - 2\gamma t\|{\bf h}_n\|_2 + \gamma^2\|{\bf P}{\rm
sgn}({\bf x}_n)\|_2^2.
$$
Consequently, for any $\mu >1$, if
$$
\|{\bf h}_n\|_2 \ge K\gamma = \gamma\frac{\mu}{2t}\max_{{\bf x}\in\mathbb{R}^N}\|{\bf P}{\rm sgn}({\bf x})\|_2^2,
$$
one has
\begin{align*}
\|{\bf h}_{n+1}\|_2^2 &\le\|{\bf h}_n\|_2^2-d\gamma^2= \|{\bf h}_n\|_2^2-\gamma^2(\mu-1)\max_{{\bf x}\in\mathbb{R}^N}\|{\bf P}{\rm sgn}({\bf x})\|_2^2.
\end{align*}
Theorem 4 is proved.
\end{proof}

\section{The Proof of Theorem 5}

\label{ProofTheorem5}

\begin{proof}
Noticing that $\bf u$ is in the kernel of $\bf A$ and $\bf P$ is a symmetric projection matrix to
the solution space, with (\ref{eq4}) and (\ref{defineu}), one has
$$
\bf Pu = u.
$$
Because $\bf u$ is a unit vector, it can be further derived that
\begin{align}
    {\bf u}^{\rm T}{\rm sgn}({\bf x})
    &=({\bf Pu})^{\rm T}{\rm sgn}({\bf x})=\left<{\bf u}, {\bf P}{\rm sgn}({\bf x})\right>\le\|{\bf P}{\rm sgn}({\bf x})\|_2.\label{usignzoom}
\end{align}
Consider the definition of $t$ in (\ref{eqinlemma2}) and (\ref{eq35}),
\begin{align}
t\le\inf_{{\bf x}\in{{\mathcal X}_1}\cup{{\mathcal X}_2}}{{\rm g}({\bf x})}\le{\bf u}^{\rm
T}{\rm sgn}({\bf x}),\label{eq13}
\end{align}
where ${\mathcal X}_1$ and ${\mathcal X}_2$ are defined in (\ref{eq15}). Combining
(\ref{usignzoom}) and (\ref{eq13}), consequently, the left inequality of (\ref{eq24}) is proved.

Now let's turn to the right inequality of (\ref{eq24}). Because of the property of projection
matrix, ${\bf P}={\bf P}^2$, the eigenvalue of ${\bf P}$ is either $0$ or $1$. For all ${\bf x}$,
one has
\begin{align*}
\|{\bf P}{\rm sgn}({\bf x})\|_2^2&={\rm sgn}^{\rm T}({\bf x}){\bf P}{\rm sgn}({\bf x})\\
&\le \max \{\lambda_{\bf P}\}{\rm sgn}^{\rm T}({\bf x}){\rm sgn}({\bf x}) \\
& = \|{\rm sgn}({\bf x})\|_2^2 \le N,
\end{align*}
where $\left\{\lambda_{\bf P}\right\}$ denotes the eigenvalue set of ${\bf P}$. The arbitrariness
of ${\bf x}$ leads to
$$
\max_{{\bf x}\in\mathbb{R}^N}{\|{\bf P}{\rm sgn}({\bf x})\|_2\le\sqrt{N}}.
$$
Therefore, Theorem 5 is proved.
\end{proof}

\section{The Proof of Lemma 2}

\begin{proof}
For $K_{\rm min}$ satisfying (\ref{eq2}), there exists $\mu'>1$ such that
\begin{equation}
K_{\rm min}=\frac{\mu'}{2t}\max_{{\bf x}\in\mathbb{R}^N}{\|{\bf P}{\rm sgn}({\bf x})\|_2^2}.
\end{equation}
Considering the recursion of sequence $\{{\bf x}_n'\}$ in (\ref{eq21}), it is expected to prove
that
\begin{align}\label{eq8}
\|{\bf x}_n'-{\bf x}^*\|_2^2-2\gamma t\|{\bf x}_n'-{\bf x}^*\|_2+\gamma^2\max_{{\bf
x}\in\mathbb{R}^N}{\|{\bf P}{\rm sgn}({\bf x})\|_2^2}<\left(\|{\bf x}_n'-{\bf x}^*\|_2-\gamma t\left(1-\frac{1}{\mu'}\right)\right)^2,
\end{align}
when
\begin{equation}\label{eq14}
\|{\bf x}_n'-{\bf x}^*\|_2\ge \gamma\cdot \frac{\mu'}{2t}\max_{{\bf x}\in\mathbb{R}^N}{\|{\bf
P}{\rm sgn}({\bf x})\|_2^2}.
\end{equation}
Using (\ref{eq14}), the difference between the left side and the right side of (\ref{eq8}) is
\begin{align}
\gamma^2\left[\max_{{\bf x}\in\mathbb{R}^N}{\|{\bf P}{\rm sgn}({\bf x})\|_2^2}-t^2
\left(1-\frac{1}{\mu'}\right)^2\right]-\frac{2\gamma t}{\mu'}\|{\bf x}_n'-{\bf x}^*\|_2 \le -\gamma^2 t^2\left(1-\frac{1}{\mu'}\right)^2<0.
\end{align}
As a consequence, (\ref{eq8}) holds and it leads to
\begin{equation}\label{eq41}
\|{\bf x}_{n+1}'-{\bf x}^*\|_2<\|{\bf x}_n'-{\bf x}^*\|_2-\gamma t\left(1-\frac{1}{\mu'}\right).
\end{equation}
According to (\ref{eq41}), the quantity of decrease by each step is at least $\gamma t(1-\frac{1}{\mu'})$.

Considering that $\{{\bf x}_n\}$ has a faster convergence rate than that of $\{{\bf x}_n'\}$, and
the trip of $\{{\bf x}_n\}$ is from $(K_{\rm max}\gamma)$-ball to $(K_{\rm
min}\gamma)$-ball, consequently the iteration number is at most
$$
\frac{(K_{\rm max}-K_{\rm min})\gamma}{\gamma t(1-\frac{1}{\mu'})} =\frac{2(K_{\rm max}-K_{\rm
min})}{2t-\frac{1}{K_{\rm min}}\displaystyle{\max_{{\bf x}\in\mathbb{R}^N}{\|{\bf P}{\rm sgn}({\bf x})\|_2^2}}}.
$$
\end{proof}

\section{The Proof of Theorem 6}
\begin{proof}
According to Lemma~2, the iteration number needed from $((n+1)K_0\gamma)$-neighborhood
to $(nK_0\gamma)$-neighborhood is at most
\begin{equation}\label{eqnton1}
\frac{K_0}{t-\frac{1}{2nK_0}\displaystyle{\max_{{\bf x}\in\mathbb{R}^N}{\|{\bf P}{\rm sgn}({\bf x})\|_2^2}}}
=\frac{K_0}{t}\left(1+\frac{1}{\mu_0 n-1}\right),
\end{equation}
where
$$
K_0=\frac{\mu_0}{2t}\max_{{\bf x}\in\mathbb{R}^N}{\|{\bf P}{\rm sgn}({\bf x})\|_2^2}
$$
and $\mu_0$ is larger than $1$.

Assume that $M_0=\|{\bf x}_0-{\bf x}^*\|_2$ obeys
$$
mK_0\gamma<M_0\le(m+1)K_0\gamma,
$$
where $m$ is a positive integer. Utilizing (\ref{eqnton1}), the total iteration number from
$M_0$-neighborhood to $(K_0\gamma)$-neighborhood is at most
\begin{equation}\label{in1}
\frac{K_0}{t}\sum_{n=1}^m\left(1+\frac{1}{\mu_0 n -1}\right),
\end{equation}
which is less than
\begin{equation}\label{in2}
\frac{M_0}{t\gamma}+\frac{K_0}{t}\ln{\left(\frac{M_0}{K_0\gamma}\right)}+\frac{2K_0}{2t-\frac{1}{K_0}\displaystyle{\max_{{\bf
x}\in\mathbb{R}^N}{\|{\bf P}{\rm sgn}({\bf x})\|_2^2}}}.
\end{equation}
Thus Theorem 6 is proved. The relation between (\ref{in1}) and (\ref{in2}) comes from
the following plain algebra,
\begin{align*}
\text{(\ref{in1})}
<&\frac{K_0}{t}\left[m+\frac{1}{\mu_0-1}+\sum_{n=2}^m\frac{1}{\mu_0(n-1)}\right]\\
<&\frac{K_0}{t}\left[m+\frac{1}{\mu_0-1}+\frac{1}{\mu_0}(\ln{(m-1)}+1)\right]\\
=&\frac{K_0}{t}\left[m+\frac{1}{\mu_0}\ln{(m-1)}+\left(\frac{1}{\mu_0-1}+\frac{1}{\mu_0}\right)\right]\\
<&\frac{M_0}{t\gamma}+\frac{K_0}{t}\ln{\left(\frac{M_0}{K_0\gamma}\right)+\frac{K_0}{t}\frac{\mu_0}{\mu_0-1}}
=\text{(\ref{in2})}.
\end{align*}
\end{proof}

\section{The Proof of Theorem 7}
\label{ProofTheorem6}

\begin{proof}
Similar to (\ref{eq10}), by defining ${\bf h}'_n={\bf x}_n-{\bf x}^{\star}$ and ${\bf
e}_n={\bf A}({\bf x}_n-{\bf x}^{\star})$, the deviation iterates by
\begin{align}\label{eq36}
\|{\bf h}'_{n+1}\|_2^2
=\|{\bf h}'_n\|_2^2-2\gamma{{\bf h}'_n}^{\rm T}{\rm sgn}({\bf x}_n)
+2\gamma{\bf e}_n^{\rm T}({\bf AA}^{\rm T})^{-1}{\bf A}{\rm sgn}({\bf x}_n)
+\gamma^2\|{\bf P}{\rm sgn}({\bf x}_n)\|_2^2.
\end{align}
From Lemma 3 and referring to (\ref{eq12}), one has
\begin{equation}\label{eq37}
{{\bf h}'_n}^{\rm T}{\rm sgn}({\bf x}_n)\ge t\|{\bf h}'_n\|_2.
\end{equation}
Next the third item of (\ref{eq36}) will be studied.
By the property of symmetric matrices,
\begin{align}\label{eq39}
& \|{\bf e}_n^{\rm T}({\bf AA}^{\rm T})^{-1}{\bf A}{\rm sgn}({\bf x}_n)\|_2^2\nonumber\\
=& {\rm sgn}^{\rm T}({\bf x}_n){\bf A}^{\rm T}({\bf AA}^{{\rm T}})^{-1}{\bf e}_n{\bf e}_n^{{\rm T}}
({\bf AA}^{{\rm T}})^{-1}{\bf A}{\rm sgn}({\bf x}_n)\nonumber\\
=& {\rm sgn}^{{\rm T}}({\bf x}_n){\bf B}{\rm sgn}({\bf x}_n)\nonumber\\
\le&\max\{\lambda_{\bf B}\}{\rm sgn}^{{\rm T}}({\bf x}_n){\rm sgn}({\bf x}_n)\le N\max\{\lambda_{\bf B}\},
\end{align}
where
$${\bf B}={\bf A}^{{\rm T}}({\bf AA}^{{\rm T}})^{-1}{\bf e}_n{\bf e}_n^{{\rm T}}({\bf AA}^{{\rm T}})^{-1}{\bf A}$$
and $\{\lambda_{\bf B}\}$ denote its eigenvalues. Notice that ${\bf e}_n^{\rm T}({\bf AA}^{{\rm
T}})^{-1}{\bf A}\in \mathbb{R}^{1\times N}$, therefore ${\rm rank}({\bf B})$ is at most
one, and at least $N-1$ of the eigenvalues are zeros. Consequently, one has
\begin{align}\label{eq40}
\max\{\lambda_{\bf B}\}&={\rm tr}({\bf B})
={\rm tr}\left({\bf e}_n^{{\rm T}}({\bf AA}^{{\rm T}})^{-1}{\bf A}{\bf A}^{{\rm T}}({\bf AA}^{{\rm T}})^{-1}
{\bf e}_n\right)\nonumber\\
&={\bf e}_n^{{\rm T}}({\bf AA}^{{\rm T}})^{-1}{\bf e}_n\nonumber\\
&\le\max\{\lambda_{({\bf AA}^{{\rm T}})^{-1}}\}{\bf e}_n^{{\rm T}}{\bf e}_n \le4\varepsilon^2\lambda,
\end{align}
where the last step can be derived by
\begin{align}
\|{\bf e}_n\|_2&=\|({\bf y}-{\bf Ax}_n)-({\bf y}-{\bf Ax}^{\star})\|_2\nonumber\\
&\le\|{\bf y}-{\bf Ax}_n\|_2+\|{\bf y}-{\bf Ax}^{\star}\|_2 \le2\varepsilon.
\end{align}
It can be easily seen that $\max\{\lambda_{({\bf AA}^{{\rm T}})^{-1}}\}$ is positive, if ${\bf
AA}^{{\rm T}}$ is an invertible matrix.

Because ${\bf e}_n^{{\rm T}}({\bf AA}^{{\rm T}})^{-1}{\bf A}{\rm sgn}({\bf x}_n)$ is a scalar,
combining (\ref{eq39}) and (\ref{eq40}), one has
\begin{equation}\label{eq42}
|{\bf e}_n^{{\rm T}}({\bf AA}^{{\rm T}})^{-1}{\bf A}{\rm sgn}({\bf x}_n)|\le
2\varepsilon\sqrt{N\lambda}.
\end{equation}

For $\forall \mu>1$, if
\begin{equation}\label{eq43}
\|{\bf x}_n-{\bf x}^{\star}\|_2\ge \gamma\cdot\frac{\mu}{2t}\max_{{\bf x}\in\mathbb{R}^N}\|{\bf
P}{\rm sgn}({\bf x})\|_2^2 +\varepsilon\cdot\frac{2}{t}\sqrt{N\lambda},
\end{equation}
using (\ref{eq37}), (\ref{eq42}) and (\ref{eq43}), we have
\begin{align}\label{eq44}
& 2({\bf x}_n-{\bf x}^{\star})^{{\rm T}}{\rm sgn}({\bf x}_n)-2{\bf e}_n^{{\rm T}}({\bf AA}^{{\rm
T}})^{-1}{\bf A}{\rm sgn}({\bf x}_n)\ge\gamma\mu\max_{{\bf x}\in\mathbb{R}^N}{\|{\bf P}{\rm sgn}({\bf x})\|_2^2}.
\end{align}

Combining (\ref{eq36}) and (\ref{eq44}), it can be concluded that under the condition of
(\ref{eq43}),
$$
\|{\bf x}_{n+1}-{\bf x}^{\star}\|_2^2\le\|{\bf x}_n-{\bf x}^{\star}\|_2^2-\gamma^2(\mu-1)\max_{{\bf
x}\in\mathbb{R}^N}{\|{\bf P}{\rm sgn}({\bf x})\|_2^2}.
$$
Then Theorem 7 is proved.
\end{proof}
\end{appendices}

\section*{Acknowledgement}

The authors appreciate Jian Jin and three anonymous reviewers for their helpful comments to
improve the quality of this paper. Yuantao Gu wishes to thank Professor Dirk Lorenz for his notification of the projected subgradient method.

% that's all folks
\end{document}